\documentclass[sigconf]{acmart}

\AtBeginDocument{%
  }

\setcopyright{acmlicensed}
\copyrightyear{2018}
\acmYear{2018}
\acmDOI{XXXXXXX.XXXXXXX}
\acmConference[Conference acronym 'XX]{Make sure to enter the correct
conference title from your rights confirmation email}{June 03--05, 2018}{Woodstock, NY}
\acmISBN{978-1-4503-XXXX-X/2018/06}
\usepackage[ruled,vlined,linesnumbered]{algorithm2e}

\makeatletter
\newcommand{\algoherestart}{\@twocolumnfalse}
\newcommand{\algoherestop}{\@twocolumntrue}
\makeatother
\usepackage{bm}
\usepackage{subcaption}
\usepackage[table]{xcolor}
\usepackage{multirow}
\usepackage{siunitx}
\newcommand{\best}[1]{\textbf{#1}}
\newcommand{\second}[1]{\underline{#1}}
\usepackage{float}
\usepackage{flafter}
\usepackage{placeins}
\usepackage[most]{tcolorbox}

\newtcolorbox{promptbox}[1]{
  enhanced,
  colback=gray!8,
  colframe=black!70,
  colbacktitle=black!75,
  coltitle=white,
  fonttitle=\bfseries,
  title=#1,
  boxrule=0.6pt,
  arc=2pt,
  left=8pt, right=8pt, top=6pt, bottom=6pt,
  toptitle=4pt, bottomtitle=4pt,
  float=htbp,
  floatplacement=htbp,
  before skip=8pt, after skip=8pt,
}
\newtheorem{theorem}{Theorem}[section]
\newtheorem{lemma}{Lemma}[section]

\newtheorem{proposition}{Proposition}[section]
\newtheorem{corollary}{Corollary}[section]

\SetKwInput{KwIn}{Input}
\SetKwInput{KwOut}{Output}
\SetKw{KwRet}{Return}

\begin{document}

\title[PURA: Provably Unbiased and Robust Multi-Bit Text Attribution]{PURA: Provably Unbiased and Robust Multi-Bit Watermarking for AI-Generated Text Attribution}


\author{Yaofei Wang}
\email{wyf@hfut.edu.cn}
\affiliation{%
  \institution{Hefei University of Technology}
  \city{Hefei}
  \country{China}
}

\author{Jinyang Guo}
\email{2025170798@mail.hfut.edu.cn}
\affiliation{%
  \institution{Hefei University of Technology}
  \city{Hefei}
  \country{China}
}

\author{Shuchao Du}
\email{dushuchao25@mails.ucas.ac.cn}
\affiliation{%
  \institution{University of Chinese Academy of Sciences}
  \city{Beijing}
  \country{China}
}

\author{Chao Wang}
\email{chaowang0708@mail.ustc.edu.cn}
\affiliation{%
  \institution{University of Science and Technology of China}
  \city{Hefei}
  \country{China}
}

\author{Qiyi Yao}
\email{qiyi.yao@bilkent.edu.tr}
\affiliation{%
  \institution{Bilkent University}
  \city{Ankara}
  \country{Türkiye}
}

\author{Donghui Hu}
\email{hudh@hfut.edu.cn}
\affiliation{%
  \institution{Hefei University of Technology}
  \city{Hefei}
  \country{China}
}

\author{Weiming Zhang}
\email{zhangwm@ustc.edu.cn}
\affiliation{%
  \institution{University of Science and Technology of China}
  \city{Hefei}
  \country{China}
}

\author{Nenghai Yu}
\email{ynh@ustc.edu.cn}
\affiliation{%
  \institution{University of Science and Technology of China}
  \city{Hefei}
  \country{China}
}

\author{Kejiang Chen}
\authornote{Corresponding author.}
\email{chenkj@ustc.edu.cn}
\affiliation{%
  \institution{University of Science and Technology of China}
  \city{Hefei}
  \country{China}
}

\renewcommand{\shortauthors}{Wang et al.}

\begin{abstract}
Fine-grained attribution of AI-generated text is becoming increasingly important for accountability and auditing, yet existing multi-bit watermarking methods still struggle to simultaneously preserve the base generation distribution, support high-capacity payloads, and remain recoverable after editing. We present PURA, a provably unbiased and robust multi-bit watermarking method for text attribution. Instead of perturbing token probabilities directly, PURA embeds payloads in the latent sampling space via keyed inverse transform sampling, and recovers them by treating observed tokens as soft interval evidence and aggregating such evidence across the sequence. This design preserves the base generation distribution exactly while substantially improving recovery stability under post-editing and channel perturbations. Building on this recovery paradigm, we further develop a unified robustness analysis and show that, under bounded attack strength, the per-bit error probability decays exponentially with sequence length. Extensive experiments show that PURA substantially outperforms existing unbiased baselines in the high-payload regime. For example, when embedding 36 bits in 200 tokens, PURA achieves a 91.7\% message match rate, more than three times that of the strongest unbiased baseline, while preserving text quality and remaining statistically close to unwatermarked text, and incurring only millisecond-level verification overhead. Our code is available at \url{https://github.com/anshigaoa/PURA}.
\end{abstract}

\begin{CCSXML}
<ccs2012>
   <concept>
       <concept_id>10010147.10010178.10010179.10010182</concept_id>
       <concept_desc>Computing methodologies~Natural language generation</concept_desc>
       <concept_significance>500</concept_significance>
   </concept>
   <concept>
       <concept_id>10002978.10002991.10002992</concept_id>
       <concept_desc>Security and privacy~Authentication</concept_desc>
       <concept_significance>500</concept_significance>
   </concept>
</ccs2012>
\end{CCSXML}

\ccsdesc[500]{Computing methodologies~Natural language generation}
\ccsdesc[500]{Security and privacy~Authentication}

\keywords{Large Language Models, Text Watermarking, Multi-bit Attribution, Provenance Tracking, Robustness, Error Correction}

\maketitle

\section{Introduction}

Text generated by large language models (LLMs) is increasingly difficult to distinguish from human writing~\cite{achiam2023gpt4,
touvron2023llama}, and provenance tracing has correspondingly
shifted from a research problem to a deployment requirement. In
many settings, it is no longer sufficient to decide whether a
passage was generated by a model. One also needs finer-grained
attribution information, such as the serving instance, the
requesting user, the generation time, or the model version. This
provenance underpins accountability, internal auditing, and
compliance: privately deployed models can be misused by insiders,
generated content can be reposted or anonymized across platforms,
and regulations increasingly push AI-generated content toward
traceability and attribution~\cite{smuha2025regulation,
biden2023executive}. Existing zero-bit watermarking schemes only
answer whether a text comes from a watermarked
model~\cite{kirchenbauer2023KGW,zhao2024provable,kuditipudi2023its,
ICLR2024_c5b00c5b}, whereas multi-bit watermarking embeds structured
attribution information during generation, moving the question
from whether a text is AI-generated to who generated it, when, and
under which model version.

Despite recent progress, existing multi-bit watermarking methods
still struggle to satisfy three objectives simultaneously. The first
is distribution preservation: a watermark should not alter the base
model's generation distribution, since sampling bias degrades text
quality and downstream utility. The second is high capacity: the
recoverable payload per unit of text should be large enough to
carry meaningful attribution information rather than only a short
marker. The third is editing robustness: the embedded payload
should remain recoverable after substitution, deletion, local
rewriting, or semantic paraphrasing. Biased schemes typically
improve capacity or recovery strength by directly perturbing token
probabilities, but introduce a clear trade-off among capacity,
robustness, and text quality~\cite{yoo2023MPAC,DBLP:conf/iclr/WangYC0LM0024,qu2024via_code}.
Unbiased schemes preserve the generation distribution more
faithfully, but often degrade sharply under high payloads and
post-editing attacks~\cite{fernandez2023cycleshift,feng2025bimark,jiang2025stealthink}.
More importantly, formal robustness claims in prior work largely
target generic editing attacks, and meaningful bounds against
attackers who understand the watermarking mechanism remain scarce.

To address these challenges, we propose PURA, a provably unbiased
and robust multi-bit watermarking framework for AI-generated text
attribution. PURA rethinks multi-bit attribution from both the
embedding side and the extraction side. On the embedding side, it
follows the sampling-level distortion-free watermarking line~\cite{kuditipudi2023its},
but moves beyond ITS-style zero-bit detection by encoding
structured multi-bit payloads in the latent sampling space,
adaptively allocating embedding capacity according to local
entropy, and using a locality-preserving coding design to mitigate
bit diffusion under local perturbations. On the extraction side,
PURA departs from hard local decoding and instead treats tokens as
soft interval evidence aggregated across the sequence into per-bit
statistics. This design preserves the base generation distribution
while allowing weak local evidence to accumulate across positions,
leading to substantially more stable recovery under text editing
and model mismatch.

Building on these per-bit aggregated statistics, we develop a
robustness analysis framework that covers two realistic attack
regimes: generic editing attacks that perturb the text through
substitution, deletion, insertion, or rewriting, and directed
attacks in which the adversary understands the watermarking
mechanism and leverages external models or proxy pipelines. Unlike
prior work that bounds robustness at the edit-distance level and
only against generic editing, we analyze recovery directly at the
soft-evidence statistic level and decompose attack effects into a
structural synchronization channel and a directional evidence
channel, bringing both attack regimes into a single framework.
PURA thus yields a per-bit error bound that decays exponentially
with sequence length and characterizes how structural
desynchronization and directional evidence drift jointly determine
recovery performance. The same embedding and extraction pipeline
supports both symmetric verification, where the verifier accesses
the source model, and asymmetric verification, where the verifier
only uses a compatible proxy model from the same family, which is
important when the source model is proprietary or cannot be exposed
directly.

We evaluate PURA on three datasets, C4, Essays, and OpenGen~\cite{rael_exploring_nodate,ivypanda2024,merity_pointer_2016},
across four model families, Llama-3, Gemma-2, BLOOM, and Pythia~\cite{grattafiori_llama_2024,gemma2_2024,bloom_2022,pythia_icml_2023},
under a wide range of editing and rewriting attacks. Even in the
asymmetric setting, PURA maintains over 90\% message match rate
and over 99\% bit accuracy across the 12--36 bits per 200 tokens
regime, substantially outperforming biased baselines in the
high-payload range and far exceeding unbiased baselines. PURA
also remains close to unwatermarked text in both quality and
statistical stealth, and asymmetric verification requires only
about 0.05 seconds per sample.

Our main contributions are as follows:
\begin{itemize}
\item We propose latent-sector multi-bit anchoring, a
distribution-preserving embedding mechanism that encodes payload
bits as context-scheduled, Gray-coded rotations of the latent
sampling point, supporting high-capacity attribution without
altering the base generation distribution.

\item We propose soft-evidence extraction for robust
multi-bit recovery. Each observed token is treated as an interval
in the latent space, converted to sector-level compatibility
weights, aggregated into per-bit log-likelihood ratios (LLRs) across the sequence, and used
to guide lightweight confidence-based repair.

\item To the best of our knowledge, this is the first robustness analysis framework for multi-bit watermarking that covers
mechanism-aware directed attacks, yielding a per-bit error bound
that decays exponentially with sequence length.
\end{itemize}
\section{Related Work}

\subsection{LLM Watermarking}

Research on watermarking for large language models begins with zero-bit schemes, whose goal is to decide whether a text was produced by a watermarked model. KGW~\cite{kirchenbauer2023KGW} partitions the vocabulary into green and red lists through an n-gram context hash and adds a logit bias to green tokens, with detection based on a statistical test over green-token counts. Unigram~\cite{zhao2024provable} replaces the context-dependent partition with a global green list fixed by the master key, trading flexibility for stronger robustness to substitution and paraphrase. Both schemes modify the output distribution and therefore expose a tension among detection strength, text quality, and statistical stealth.

Subsequent work explores distribution-preserving watermarking under key randomness. One line still reweights per-step token probabilities, but constructs the reweighting so that the expectation over keys matches the original distribution, including $\gamma$-reweight, DiPmark, STA, MCmark, and SynthID-Text~\cite{ICLR2024_c5b00c5b,wu2023dipmark,mao2024STA,mcmark2025Chen,dathathri2024synthid}. Another line leaves token probabilities unchanged and instead couples key randomness directly with the sampling procedure, including Aaronson's Gumbel-max scheme and the ITS/EXP schemes of Kuditipudi et al.~\cite{aaronson2023openai,kuditipudi2023its}, while PURA's embedding stage builds on this sampling-level line. All zero-bit schemes, however, only decide watermark presence and carry no structured provenance information.

Multi-bit watermarking instead embeds short payloads during generation, such as user, timestamp, or version information. Existing methods can likewise be divided according to whether they alter the output distribution. Biased schemes typically extend the KGW-style logit-bias framework with message encoding. MPAC~\cite{yoo2023MPAC} routes each token to a payload position through a context hash and biases the color list matching the target symbol. CTWL~\cite{DBLP:conf/iclr/WangYC0LM0024} uses a proxy language model to select probability-balanced token subsets before applying the bias. RS-BH~\cite{qu2024via_code} packs bits into segments and combines message-dependent hashing with Reed--Solomon correction. This family usually achieves strong recovery at high payloads, at the cost of degraded quality and statistical stealth as payload increases. Unbiased schemes preserve the distribution in expectation over the key. CycleShift~\cite{fernandez2023cycleshift} assigns each message a cyclically shifted secret vector, BiMark~\cite{feng2025bimark} uses XOR-masked multi-layer bit-flip reweighting, and StealthInk~\cite{jiang2025stealthink} suppresses a message-aligned interval on a permuted vocabulary with compensatory reweighting on a symmetric interval. Compared with biased methods, this family better preserves quality and stealth, but both capacity and extraction accuracy remain limited under high payloads and post-editing.

Concurrent works MC$^2$Mark~\cite{cui2026mc2markdistortionfreemultibitwatermarking} and ArcMark~\cite{gilani2026arcmarkdistortionfreemultibytellm} also study distortion-free multi-bit watermarking with different embedding and decoding constructions.


\subsection{Attacks on LLM Watermarks}

Early robustness evaluations mostly rely on mechanism-agnostic perturbations, including token- or character-level insertion, deletion, and substitution, as well as paraphrase-style rewriting such as DIPPER~\cite{krishna2023paraphrasing}. The effectiveness of such attacks is largely governed by edit distance or semantic drift rather than by the structure of the watermarking algorithm itself.

Mechanism-aware attacks explicitly exploit the structure of the target watermark, but differ in the assumptions they rely on. Attacks against biased watermarks mainly exploit persistent green-list bias. Watermark Stealing~\cite{pmlr-v235-jovanovic24a} queries a watermarked model at scale to approximate the bias structure of KGW and Unigram for spoofing and scrubbing, and DE-MARK~\cite{chen2025demark} recovers the prefix length, bias strength, and green/red partition of $n$-gram watermarks through probing. Both require a token-level bias that persists across samples and therefore do not apply to distribution-preserving watermarks.

Attacks against unbiased watermarks often rely on stronger structural assumptions. Reynolds et al.~\cite{reynolds2025breaking} attack the distortion-free ITS scheme of Kuditipudi et al.~\cite{kuditipudi2023its} by recovering its secret permutation and key sequence through comparison sorting, under the assumption of a globally fixed permutation and a position-wise cyclic key sequence independent of context. Zhang et al.~\cite{zhang2026character} exploit tokenizer-sensitive character perturbations to spread the effect of a single character edit across multiple neighboring token positions, with the strongest variant additionally relying on detector-guided search.

A third class assumes neither biased sampling structure nor fixed secret parameters or detector feedback. Diaa et al.~\cite{diaa2025optimizing} and Huang et al.~\cite{huang2025RLCracker} fine-tune open-source paraphrasers through preference optimization and reinforcement learning so that the rewritten text moves away from the watermarked distribution while preserving semantics. SIRA~\cite{pmlr-v267-cheng25c} performs targeted rewriting through masking and refilling of high-entropy positions, and the Smoothing Attack~\cite{chang-etal-2025-watermark} selectively resamples low-confidence positions. These attacks are closer to the threat model considered in PURA, since they reflect the setting in which an attacker has no secret key or detector feedback but can still leverage external models and knowledge of the watermarking mechanism.

\subsection{Linguistic Steganography}
Linguistic steganography conceals messages within natural language, fundamentally prioritizing stealth for covert communication over survivability for provenance authentication. Provably secure steganography methods ~\cite{Ziegler2019Neural,Kaptchuk2021Meteor,iMEC,Ding2023Discop,Wang2025SparSamp,wang2026ANStega} in symmetric settings achieve remarkable capacity and security. However, this comes at the cost of strict synchronization: the decoder must hold the exact language model and flawlessly reproduce the probability distribution step-by-step. In realistic adversarial settings, even trivial token-level edits perturb the shared context, causing cascading desynchronization and catastrophic recovery failures.


To alleviate the heavy burden of decoder-side synchronization, recent work has explored asymmetric constructions. For example, Bai et al.~\cite{bai_provably_2025} enable the decoder to infer a distribution permutation directly from the carrier, and Disreo~\cite{11329500} extracts bits from token positions within a permuted vocabulary without accessing the original language model. While successfully relaxing the hardware and model access limits, these methods incur a precipitous drop in embedding capacity---often requiring tens of tokens to convey a single bit---and remain highly sensitive to local edits. Ultimately, because steganography (whether symmetric or asymmetric) is inherently designed for secrecy over a cooperative channel rather than survivability against an active adversary, existing steganographic schemes fail to meet the joint high-capacity and robust attribution requirements of fine-grained LLM auditing.

\section{Problem Formulation and Preliminaries}
\label{sec:problem_setting}

\subsection{Multi-bit Watermarking and Verification Settings}
\label{subsec:problem_formulation}

We study multi-bit watermarking for autoregressive language models.
Let $\mathcal{M}_{\mathrm{src}}$ denote the source model used during
generation. Given a user prompt, the encoder generates a text
sequence $\mathbf{x} = (x_1, \dots, x_L)$ while embedding a binary
payload $m \in \{0,1\}^{L_m}$ into the generation process.

At verification time, the verifier observes either the original text
$\mathbf{x}$ or an edited version $\mathbf{x}'$ and outputs a
recovered payload $\hat{m} \in \{0,1\}^{L_m}$. We use two evaluation
metrics: \emph{bit accuracy}, the fraction of payload bits recovered
correctly, and \emph{match rate}, the probability that the entire
payload is recovered without error.

We consider two verification settings that differ in the verifier's
access to the source model. In the \emph{symmetric setting}, the
verifier uses the same model $\mathcal{M}_{\mathrm{src}}$ that
generated the text, which yields the strongest recovery signal and
fits internal auditing scenarios. In the \emph{proxy-based setting},
the verifier cannot access $\mathcal{M}_{\mathrm{src}}$ and instead
uses a compatible open proxy model $\mathcal{M}_{\mathrm{prx}}$ from
the same family, which approximates the source-model next-token
distribution without exposing the source weights and is more
realistic when the source model is proprietary. Both settings share
the same embedding and recovery pipeline. The additional challenge
in the proxy-based setting is source--verifier distribution mismatch,
which motivates the soft-evidence aggregation design introduced in
Section~\ref{sec:method}. We use $\mathcal{M}_{\mathrm{ver}}$ to
denote the verifier-side model, namely $\mathcal{M}_{\mathrm{src}}$
in the symmetric setting and $\mathcal{M}_{\mathrm{prx}}$ in the
proxy-based setting.

\subsection{Autoregressive Generation and ITS}
\label{subsec:its_background}

At each step $t \in \{1,\dots,L\}$, the language model produces a
next-token distribution $\pi_t$ over the vocabulary $\mathcal{V}$
conditioned on the preceding context and samples the current output
token from $\pi_t$.

A standard way to sample from a discrete $\pi_t$ is inverse
transform sampling (ITS). Fix an ordering
$(v_1, \dots, v_{|\mathcal{V}|})$ of $\mathcal{V}$ and define the
cumulative distribution function (CDF) by
\begin{equation}
  F_t(r) = \sum_{k=1}^{r} \pi_t(v_k), \qquad F_t(0) = 0.
  \label{eq:cdf}
\end{equation}
Given a uniform variable $u \sim U[0,1)$, ITS returns
\begin{equation}
  r = \min\{k \mid F_t(k) > u\}, \qquad x_t = v_r.
  \label{eq:its}
\end{equation}
This samples exactly from $\pi_t$, and each sampled token $x_t$
corresponds to an interval on $[0,1)$:
\begin{equation}
  I_t(x_t) = \bigl[F_t(r-1),\, F_t(r)\bigr).
  \label{eq:token_interval}
\end{equation}

PURA further relies on the rotational invariance of the uniform
distribution on $[0,1)$: for any fixed offset $\Delta \in [0,1)$,
\begin{equation}
  u \sim U[0,1) \;\Longrightarrow\; u \oplus \Delta \sim U[0,1),
  \label{eq:rotation}
\end{equation}
where $\oplus$ denotes addition modulo $1$. Replacing $u$ with
$u \oplus \Delta$ in Equation~\eqref{eq:its} therefore leaves the
induced token distribution unchanged, so payload information can be
encoded as a phase offset in the latent sampling space without
altering the base generation distribution.

\subsection{Threat Model}
\label{subsec:threat_model}

We consider two attacker models ordered by capability. Both follow
the spirit of Kerckhoffs's principle: the attacker may know the
complete watermarking algorithm but cannot obtain the key
$\mathsf{sk}$ and cannot brute-force the key space in polynomial
time. Throughout the attack, the attacker receives no detection
feedback.

\noindent\textbf{TM0: non-adaptive attacker.}
The attacker observes a watermarked text $\mathbf{x}$ and applies
generic post-processing edits to obtain $\mathbf{x}'$, including
token- and character-level insertion, deletion, substitution, and
local rewriting. These edits are applied independently of the
watermark structure, and character-level perturbations are
absorbed through their effect on tokenization.

\noindent\textbf{TM1: adaptive attacker.}
In addition to the capabilities of TM0, the attacker has full access
to the public specification of the watermarking mechanism and may
leverage auxiliary resources such as open-source language models,
public collections of watermarked text, and proxy watermarking
pipelines instantiated from that specification. Using this
algorithmic knowledge, the attacker mounts more targeted attacks
that, at comparable semantic cost, degrade recovery more
efficiently than mechanism-agnostic edits.

Under both threat models, the verifier is expected to retain
non-trivial per-bit recovery as long as enough synchronized
evidence survives editing. Section~\ref{sec:theoretical_analysis}
formalizes this condition.
\section{Method}
\label{sec:method}

PURA is a distribution-preserving multi-bit attribution scheme
that uses inverse transform sampling (ITS) as an unbiased sampling
primitive. The core idea is to encode payload bits by shifting the
latent sampling anchor while keeping the induced next-token
distribution unchanged. At verification time, PURA does not decode
each token into a hard bit pattern. Each observed token instead
defines an interval in the latent sampling space, which PURA
converts into soft bit evidence and aggregates across the sequence.

The method has four components. Context-gated payload scheduling
decides whether a step is usable, how many bits it can carry, and
which payload positions it reads. Gray-coded latent anchoring then
maps the selected bits to a sector of the unit interval. Keyed ITS
over a permuted vocabulary samples the token without changing the
base distribution. Finally, soft-evidence extraction integrates
interval evidence, aggregates per-bit LLRs, and applies
lightweight reliability-guided repair. The repair step uses a
single-parity-check (SPC) code.
Each component addresses a distinct design constraint.
Context-gated scheduling avoids assigning many bits to unreliable
low-entropy steps, while Gray-coded latent anchoring localizes errors
caused by sector drift. Keyed ITS preserves the generation distribution,
and soft-evidence aggregation with parity repair supports recovery after
editing.
\subsection{Embedding: Context-Scheduled Latent Anchoring}
\label{subsec:embedding}

Let $m \in \{0,1\}^{L_m}$ be a binary payload. Before embedding, we
partition $m$ into $B_{\mathrm{blk}}$ blocks
$m^{(i)} \in \{0,1\}^{\ell}$ of length $\ell$, and apply a
lightweight blockwise parity encoding to obtain the encoded payload
$c$:
\begin{equation}
  p_i = \bigoplus_{j=1}^{\ell} m^{(i)}_j,
  \label{eq:parity}
\end{equation}
\begin{equation}
  c = \bigl(m^{(1)} \| p_1 \| \cdots \|
      m^{(B_{\mathrm{blk}})} \|
      p_{B_{\mathrm{blk}}}\bigr)
  \;\in\; \{0,1\}^{|c|},
  \quad |c| = B_{\mathrm{blk}}(\ell+1).
  \label{eq:encoded_payload}
\end{equation}
Figure~\ref{fig:embedding} summarizes the resulting single-step
embedding pipeline.

\noindent\textbf{Context-Gated Payload Scheduling.}
At each step, PURA assigns an embedding arity $a_t$ based on the
entropy of the next-token distribution $\pi_t$. Entropy controls
how finely the unit interval can be sectorized: high-entropy steps
yield narrow token intervals that resolve fine sectors, while
low-entropy steps produce wide intervals that blur sector
boundaries and erase the embedded bits. Low-entropy steps are
therefore skipped or assigned fewer bits, while high-entropy steps
support finer sectorization. In principle $a_t$ can take arbitrary
non-negative values, but larger arities shrink sectors faster than
typical entropy can sustain under editing, so we cap
$a_t \in \{0,1,2,3\}$ throughout this paper. Appendix
~\ref{app:gating_ablation} confirms this schedule.

PURA uses two keyed and domain-separated hash functions:
$\mathcal{H}_{\mathrm{seed}}$ for seed generation and
$\mathcal{H}_{\mathrm{idx}}$ for payload indexing. Both produce
$\lambda$-bit outputs, which we map to integers and to $[0,1)$ by
\begin{equation}
  \mathrm{Int}(h)\in\{0,\dots,2^\lambda-1\},
  \qquad
  \mathrm{Unif}(h)=\mathrm{Int}(h)/2^\lambda \in [0,1).
  \label{eq:hash_int_unif}
\end{equation}
All keyed derivations use distinct public domain prefixes for their
respective purposes.
To avoid reusing the same latent randomness under repeated local
contexts, PURA skips a step if the seed hash of the current context
has appeared earlier in the same session:
\begin{equation}
  a_t = 0 \quad \text{if} \quad
  \mathcal{H}_{\mathrm{seed}}(x_{t-W:t-1};\mathsf{sk})
  \in \mathsf{Hist},
  \label{eq:uniqueness}
\end{equation}
where $W$ is the context window size and
$\mathsf{Hist}$ stores previous seed hashes.

For each usable step, PURA maps the current context to a cyclic
payload index:
\begin{equation}
  \mathrm{idx}(t) =
  \mathrm{Int}\!\bigl(
    \mathcal{H}_{\mathrm{idx}}(x_{t-W:t-1};\mathsf{sk})
  \bigr) \bmod |c|.
  \label{eq:idx}
\end{equation}
Starting from $\mathrm{idx}(t)$, PURA reads $a_t$ consecutive bits from $c$
in cyclic order. The selected bit sequence is denoted by
\[
b_t=\bigl((b_t)_0,\ldots,(b_t)_{a_t-1}\bigr),
\]
where
\begin{equation}
  (b_t)_j = c_{(\mathrm{idx}(t)+j) \bmod |c|},
  \qquad j \in \{0, \dots, a_t - 1\}.
  \label{eq:payload_bits}
\end{equation}
This scheduling avoids sequential assignment and spreads evidence
more evenly over payload positions.

\noindent\textbf{Gray-Coded Latent Anchoring.}
Given the selected bits $b_t$, PURA partitions the unit interval into
$K_t = 2^{a_t}$ equal-width sectors and labels them with a Gray code.
Let $G_{a_t}:\{0,\dots,K_t{-}1\}\to\{0,1\}^{a_t}$ denote the Gray map
of arity $a_t$, and write $G_{a_t}^{-1}$ for its inverse. The
selected bits determine the sector index:
\begin{equation}
  S_t = G_{a_t}^{-1}(b_t) \in \{0,\dots,K_t-1\}.
  \label{eq:gray_sector}
\end{equation}
Gray coding localizes sector drift. Under ordinary binary indexing,
two adjacent sectors may differ in several bits, for example
$011 \to 100$ when $a_t = 3$. Under Gray coding, adjacent sectors
differ in exactly one bit, so a small latent shift is more likely to
create a sparse bit error rather than a multi-bit error.

PURA derives a pseudo-random base phase from the current context:
\begin{equation}
  U_t = \mathrm{Unif}\!\bigl(
    \mathcal{H}_{\mathrm{seed}}(x_{t-W:t-1};\mathsf{sk})
  \bigr) \in [0,1),
  \label{eq:base_seed}
\end{equation}
and rotates it to the center of the selected sector:
\begin{equation}
  A_t = U_t \oplus \frac{S_t + 0.5}{K_t},
  \qquad a_t > 0,
  \label{eq:anchor}
\end{equation}
where $\oplus$ denotes addition modulo $1$. The payload therefore
selects a latent sampling anchor, rather than modifying token
probabilities.

\noindent\textbf{Distribution-Preserving Keyed Sampling.}
PURA samples from the model distribution by applying ITS in a
key-derived permuted vocabulary order. Let $\Pi$ be a key-derived
permutation over $\mathcal{V}$, and let the permuted vocabulary be
$\tilde{\mathcal{V}}=(\tilde{v}_1, \dots,\tilde{v}_{|\mathcal{V}|})$.
For any token $v\in\mathcal{V}$, write $\mathrm{rank}_\Pi(v)$ for its
position in $\tilde{\mathcal{V}}$, i.e., the unique $r$ with
$\tilde{v}_r = v$. The corresponding secret-space CDF is
\begin{equation}
  \tilde{F}_t(r) = \sum_{k=1}^{r} \pi_t(\tilde{v}_k),
  \qquad \tilde{F}_t(0) = 0.
  \label{eq:permuted_cdf}
\end{equation}
The permutation keeps the token-to-latent correspondence secret,
which prevents an attacker with a compatible public model from
directly using the public distribution and a fixed latent ordering to
choose targeted replacements.

When $a_t = 0$, PURA samples directly from $\pi_t$ without payload
anchoring. Otherwise, it samples by
\begin{equation}
  r_t = \min\{k \mid \tilde{F}_t(k) > A_t\},
  \qquad x_t = \tilde{v}_{r_t}.
  \label{eq:secret_sampling}
\end{equation}
Since $U_t$ is pseudo-random uniform under a fresh context and
modular rotation preserves uniformity, $A_t$ remains uniform for any
fixed payload-dependent sector offset. Therefore, ITS with anchor
$A_t$ samples exactly from $\pi_t$. Formal unbiasedness is given in
Theorem~\ref{thm:comp_unbiasedness}.

\begin{figure}[t]
  \centering
  \includegraphics[width=\linewidth]{figures/embed3424.png}
  \caption{Single-step embedding in PURA.
\textbf{(A)} The context determines whether the step is usable,
its bit-width, and the payload positions to read.
\textbf{(B)} The selected bits index a Gray-coded sector that
rotates the base phase $U_t$ into a latent anchor $A_t$.
\textbf{(C)} Keyed ITS over the permuted vocabulary maps $A_t$ to
a token. The payload acts only on the latent anchor, leaving the
induced next-token distribution unchanged.}
  \Description{A three-stage embedding pipeline in which context gating selects payload bits, Gray coding rotates a uniform phase into a sector anchor, and keyed inverse-transform sampling over a permuted vocabulary emits the next token.}
  \label{fig:embedding}
\end{figure}

\subsection{Extraction: Soft Evidence Aggregation}
\label{subsec:extraction}

Let $\mathcal{M}_{\mathrm{ver}}$ denote the verifier-side model,
namely $\mathcal{M}_{\mathrm{src}}$ in the symmetric setting and
$\mathcal{M}_{\mathrm{prx}}$ in the proxy-based setting. Let
$\pi_t^{\mathrm{ver}}$ be the verifier-side next-token distribution
at step $t$, and let $\tilde{F}_t^{\mathrm{ver}}$ be the secret-space
CDF obtained by applying the same permutation $\Pi$ to
$\pi_t^{\mathrm{ver}}$ as in Equation~\eqref{eq:permuted_cdf}. Given
an observed sequence, which may be edited, the verifier recomputes
the same context gate, payload index, secret vocabulary order, and
verifier-side phase
\[
  U_t^{\mathrm{ver}} =
  \mathrm{Unif}\!\bigl(
    \mathcal{H}_{\mathrm{seed}}(x_{t-W:t-1};\mathsf{sk})
  \bigr).
\]
When the context window survives editing, $U_t^{\mathrm{ver}}$ agrees
with the embedding-side phase. Otherwise, the step is desynchronized
and contributes little aligned signal in aggregation.

Unlike recovery schemes based on local hard decisions, PURA does not
assign each observed token to a single sector or bit pattern. Each
observed token defines an interval in the latent sampling space.
PURA converts this interval into sector-level compatibility weights,
projects these weights into bit-level LLRs, and aggregates the LLRs
over payload positions. Figure~\ref{fig:extraction} illustrates this
pipeline.
\begin{figure*}[!tp]
  \centering
  \includegraphics[width=0.95\linewidth]{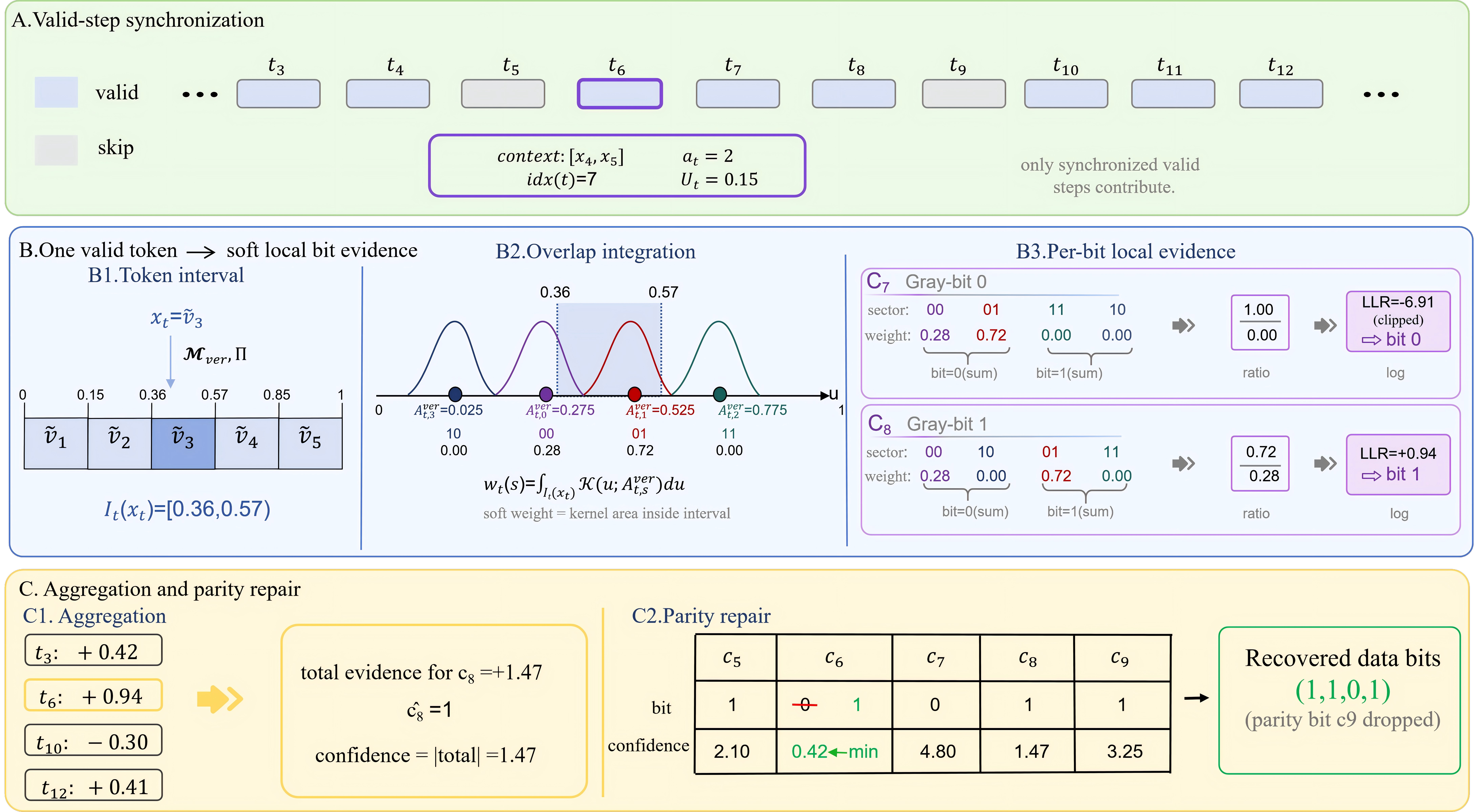}
  \caption{PURA extraction by soft evidence aggregation.
\textbf{(A)} The verifier retains synchronized valid steps and
recovers each payload index from context.
\textbf{(B)} The token's interval in the permuted secret
vocabulary is integrated against each sector anchor through a
symmetric kernel. Sector weights are then projected onto Gray-bit
coordinates to form local LLRs.
\textbf{(C)} LLRs are aggregated per payload position, with the sign
giving the bit estimate and the magnitude guiding parity-based
repair.}
  \Description{A three-stage extraction pipeline in which synchronized valid tokens define intervals, kernel overlap produces sector and bit-level evidence, and aggregated LLRs with parity repair recover the payload.}
  \label{fig:extraction}
\end{figure*}
\noindent\textbf{Interval-Based Sector Weighting.}
For each observed token $x_t$, the verifier first locates its
probability interval in the verifier-side secret vocabulary space:
\begin{equation}
  I_t(x_t) = \Bigl[
    \tilde{F}_t^{\mathrm{ver}}\!\bigl(
      \mathrm{rank}_\Pi(x_t)-1\bigr),\;
    \tilde{F}_t^{\mathrm{ver}}\!\bigl(
      \mathrm{rank}_\Pi(x_t)\bigr)
  \Bigr).
  \label{eq:obs_interval}
\end{equation}
Thus, an observed token corresponds to an interval on $[0,1)$, not
to a single latent point.

For each candidate sector $s\in\{0,\dots,K_t{-}1\}$, let
$A_{t,s}^{\mathrm{ver}} = U_t^{\mathrm{ver}} \oplus (s+0.5)/K_t$
be its candidate anchor. PURA measures how compatible the observed
interval is with this anchor by integrating a symmetric kernel over
the interval:
\begin{equation}
  w_t(s) =
  \int_{I_t(x_t)}
  \mathcal{K}(u;\, A_{t,s}^{\mathrm{ver}})\, du,
  \label{eq:sector_weight}
\end{equation}
where $\mathcal{K}$ is a symmetric kernel. The weight $w_t(s)$ is a
soft compatibility score: one token can support several nearby
sectors with different strengths. Since the kernel form is not
central to the recovery paradigm, we give the unified formulation
here and defer kernel definitions and ablations to
Appendix~\ref{app:kernels} and Section~\ref{subsec:analysis}.

\noindent\textbf{Local Bit Evidence.}
The sector weights are then projected to Gray-bit coordinates. For
each coordinate $j \in \{0, \dots, a_t{-}1\}$, define
\[
\mathcal{S}^{(j)}_b =
\bigl\{s \in\{0,\dots,K_t{-}1\}\;\big|\; [G_{a_t}(s)]_j = b\bigr\},
\quad b\in\{0,1\}.
\]
Let $p_t^{(j)}\in(0,1)$ denote the bit-$1$ posterior implied by the
sector weights, namely
\[
p_t^{(j)}
=
\frac{\sum_{s\in\mathcal{S}^{(j)}_1} w_t(s)}
     {\sum_{s\in\{0,\dots,K_t-1\}} w_t(s)}.
\]
For numerical stability, we clip this posterior to
$\bar p_t^{(j)} = \mathrm{clip}(p_t^{(j)},\epsilon,1-\epsilon)$ with
a small constant $\epsilon$, and define the local LLR as
\begin{equation}
  \Lambda_t^{(j)} =
  \log\frac{\bar p_t^{(j)}}{1-\bar p_t^{(j)}}.
  \label{eq:local_llr}
\end{equation}
A positive value supports bit value $1$, a negative value supports
bit value $0$, and the magnitude measures local evidence strength.
The clipping bound directly controls the bounded-increment property
used in our robustness analysis (Section~\ref{subsec:robustness_framework}).
A single local LLR can be weak under editing or source-verifier
mismatch, so PURA uses it as soft evidence rather than as a
standalone decision.

\noindent\textbf{LLR Aggregation and Per-Bit Decoding.}
PURA aggregates all local LLRs assigned to the same encoded payload
position $i \in \{0, \dots, |c|{-}1\}$ according to the cyclic
indexing rule:
\begin{equation}
  \Lambda_{\mathrm{total}}^{(i)} =
  \sum_{\substack{t, j \\
  (\mathrm{idx}(t)+j) \bmod |c| = i}}
  \Lambda_t^{(j)}.
  \label{eq:global_llr}
\end{equation}
This aggregate LLR is the main recovery statistic: its sign gives the
bit estimate and its magnitude gives the confidence. We decode each
bit as
\begin{equation}
  \hat{c}_i =
  \mathbb{I}\!\left(
    \Lambda_{\mathrm{total}}^{(i)} \ge 0
  \right).
  \label{eq:hard_decision}
\end{equation}
Weak but aligned local evidence is reinforced by the summation, while
unstructured noise tends to cancel out.

\noindent\textbf{Reliability-Guided Repair.}
The magnitude $\bigl|\Lambda_{\mathrm{total}}^{(i)}\bigr|$ also
provides a bit-level confidence score. Since each block in $c$
contains one parity bit, a failed parity check identifies a block
with inconsistent recovery. For the $q$-th block
($q\in\{1,\dots,B_{\mathrm{blk}}\}$), let
$\mathcal{B}_q\subseteq\{0,\dots,|c|-1\}$ be its bit positions in
$c$. PURA repairs such a block by flipping its least reliable bit:
\begin{equation}
  i^* = \operatorname*{arg\,min}_{i \in \mathcal{B}_q}
        \bigl|\Lambda_{\mathrm{total}}^{(i)}\bigr|,
  \qquad \hat{c}_{i^*} \leftarrow 1 - \hat{c}_{i^*}.
  \label{eq:spc_repair}
\end{equation}
This rule is matched to the upstream design: Gray coding makes local
sector drift more likely to cause a single-bit error, and the
aggregated LLR magnitude identifies the least reliable bit in the
block. After repair, the verifier removes the parity bits and
returns the recovered payload.

Overall, PURA forms a closed embedding and extraction loop.
Entropy-aware scheduling selects informative positions. Gray-coded
anchoring turns payload bits into local latent shifts without
changing token probabilities. Interval-based extraction keeps
uncertainty instead of discarding it through hard decisions. LLR
aggregation converts many weak local signals into reliable payload
estimates, and lightweight parity repair corrects the most likely
residual errors.

\section{Theoretical Analysis}
\label{sec:theoretical_analysis}

\subsection{Computational Unbiasedness}
\label{subsec:comp_unbiasedness}

Recall Equations~\eqref{eq:anchor} and~\eqref{eq:secret_sampling}. PURA generates each next token by inverse transform sampling over a secretly permuted vocabulary order, with anchor $A_t = U_t \oplus \Delta_t$, where $U_t \in [0,1)$ is a keyed pseudo-random phase derived from the local context and $\Delta_t$ is a payload-dependent offset. In skip mode, namely when $a_t = 0$, PURA falls back to baseline sampling $x_t \sim \pi_t(\cdot \mid x_{<t})$.

\begin{theorem}[Computational unbiasedness for a single invocation]
\label{thm:comp_unbiasedness}
Let $\mathsf{Gen}_{\mathrm{PURA}}$ and $\mathsf{Gen}_{\mathrm{Baseline}}$ denote the watermarked generator and the unwatermarked generator, respectively. Fix an initial prompt $x_{\le 0}$ and a payload. Assume that all keyed derivations use domain-separated secure pseudorandom functions and that the secret key is sampled uniformly. Then, for any probabilistic polynomial-time distinguisher $\mathcal{D}$,
\[
\Bigl|\Pr[\mathcal{D}(\mathsf{Gen}_{\mathrm{PURA}})=1]
-\Pr[\mathcal{D}(\mathsf{Gen}_{\mathrm{Baseline}})=1]\Bigr|
\le \mathrm{negl}(\lambda).
\]
\end{theorem}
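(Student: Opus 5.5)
I will prove the statement with a standard two-hybrid argument. The goal is to replace every keyed pseudorandom quantity with genuine randomness and then show that the resulting idealized generator has exactly the baseline output distribution.

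\emph{Hybrid $H_0$} is $\mathsf{Gen}_{\mathrm{PURA}}$ itself, with key $\mathsf{sk}$ sampled uniformly.

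\emph{Hybrid $H_1$} replaces $\mathcal{H}_{\mathrm{seed}}(\cdot;\mathsf{sk})$, $\mathcal{H}_{\mathrm{idx}}(\cdot;\mathsf{sk})$ and the key-derived permutation $\Pi$ with truly random functions, lazily sampled on fresh inputs, and a uniformly random permutation. Domain separation is what allows this: the three derivations are independent restrictions of one PRF, so a single PRF game handles all of them at once.

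The reduction builds a PPT adversary $\mathcal{B}$ with oracle access to either the PRF or a random function. $\mathcal{B}$ runs the generator on the fixed prompt and payload, answering every keyed query through its oracle, and then feeds the output to $\mathcal{D}$. Some care is needed for the permutation: I will assume it is derived by a PPT procedure from PRF outputs, for example by sorting keyed hashes of tokens, so $\mathcal{B}$ can compute it from oracle answers. Since the generation runs for polynomially many steps, the PRF security bound gives
\[
\bigl|\Pr[\mathcal{D}(H_0)=1]-\Pr[\mathcal{D}(H_1)=1]\bigr|\le\mathrm{negl}(\lambda).
\]

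The main step is to show that $H_1$ is \emph{identically} distributed to $\mathsf{Gen}_{\mathrm{Baseline}}$, up to the $2^{-\lambda}$ discretization of $\mathrm{Unif}$. I argue by induction on $t$, conditioning on the prefix $x_{<t}$.

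If $a_t=0$, either by the entropy gate or by the uniqueness rule of Equation~\eqref{eq:uniqueness}, both generators sample $x_t\sim\pi_t$. The gate depends only on $\pi_t$ and $\mathsf{Hist}$, so it is fixed by the prefix.

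If $a_t>0$, the uniqueness rule guarantees that the seed-hash input $x_{t-W:t-1}$ has not been queried before in this session. Hence the random-function value, and therefore $U_t$, is uniform on the grid $\{k/2^\lambda\}$ and independent of the entire history. One detail needs checking here. The rule compares hash \emph{outputs}, whereas freshness of the random-function value requires that the context \emph{input} be new. A repeated context gives a repeated hash value and is skipped, so every unskipped context is fresh, which is exactly what is needed.

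The offset $\Delta_t=(S_t+0.5)/K_t$ depends on $\mathrm{idx}(t)$ and the payload, and these are independent of $U_t$ under domain separation. Conditioned on $\Delta_t$, Equation~\eqref{eq:rotation} makes $A_t=U_t\oplus\Delta_t$ uniform. Two small points: $\Delta_t$ lies on a dyadic grid with $K_t\mid 2^{\lambda+1}$, so the rotation is a bijection of the grid after at most a half-step shift; alternatively, one can simply absorb an $O(|\mathcal V|2^{-\lambda})$ statistical error per step.

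ITS with a uniform anchor over any fixed ordering, here $\Pi$, returns $\tilde v_r$ with probability $\pi_t(\tilde v_r)$. So $x_t\sim\pi_t$ exactly, independent of $\Pi$. Summing the per-step statistical errors over polynomially many steps gives a negligible total.

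The obstacle I expect is subtle correlation. The phases must be independent across steps and independent of the values used for gating, indexing and the permutation. This is why I will lean explicitly on the uniqueness rule for cross-step freshness and on domain separation for cross-purpose independence.

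Combining the computational gap between $H_0$ and $H_1$ with the negligible statistical gap between $H_1$ and the baseline, the triangle inequality yields the claimed bound.
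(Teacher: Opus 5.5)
Your proposal is correct and follows the paper's proof essentially step for step. Both go through a PRF hybrid from real PURA to a random-function world, followed by an exact per-step argument in that world: skip steps coincide with the baseline, the repeated-context rule makes $U_t$ uniform given $x_{<t}$, domain separation makes $\Delta_t$ independent of $U_t$, rotational invariance and ITS over any fixed order reproduce $\pi_t$, and the chain rule finishes. You are more explicit than the paper about the reduction, about deriving $\Pi$ from oracle outputs, and about the $2^{-\lambda}$ grid. The one phrase to tighten is ``fresh'': because the embedding pseudocode adds to $\mathsf{Hist}$ only at watermarked steps, it should mean ``not used at an earlier watermarked step'' rather than ``never queried''. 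That weaker property still suffices, because tokens at skipped steps do not depend on the seed value.
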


\begin{proof}[Proof sketch]
Replacing the domain-separated keyed derivations by random functions changes any efficient distinguisher's view only negligibly. By the repeated-context skipping rule in Equation~\eqref{eq:uniqueness}, each watermarked step then uses a fresh seed-hash input, so $U_t$ is uniform on $[0,1)$. By rotational invariance from Equation~\eqref{eq:rotation}, $U_t \oplus \Delta_t$ remains uniform, and inverse transform sampling exactly reproduces the original next-token distribution. Full details appear in Appendix~\ref{app:onecall_unbiased}.
\end{proof}

\subsection{Robustness Framework and Analysis}
\label{subsec:robustness_framework}

We analyze recovery for one fixed encoded payload position $i$ on the
verifier-observed sequence $\mathbf{x}'=(x'_1,\dots,x'_{L'})$, which
may equal the original text $\mathbf{x}$ when no editing occurs. Let
\[
  \mathcal{T}_i =
  \{\,t\in\{1,\dots,L'\}: \exists j\in\{0,\dots,a_t-1\},
  (\mathrm{idx}(t)+j)\bmod |c|=i\,\}
\]
be the set of verifier-side token steps that contribute evidence to
bit $i$. Since $a_t < |c|$ in all our settings, for each
$t\in\mathcal{T}_i$ this local coordinate is unique. Let
$T=|\mathcal{T}_i|$, and write $\Lambda_t$ for the corresponding
local LLR contributed by step $t$ to encoded payload position $i$.
We drop the index $i$ in the rest of this subsection and state the
bounds for a fixed evidence count $T$.

\noindent\textbf{Probabilistic preliminaries.}
We first recall the standard notions used in the analysis. A filtration $\{\mathcal{G}_t\}_{t\ge 0}$ is an increasing sequence of $\sigma$-algebras that represents the information revealed up to time $t$. A stochastic process $\{X_t\}$ is a martingale with respect to $\{\mathcal{G}_t\}$ if $X_t$ is $\mathcal{G}_t$-measurable, $\mathbb{E}[|X_t|]<\infty$, and
\[
  \mathbb{E}[X_t \mid \mathcal{G}_{t-1}] = X_{t-1};
\]
equivalently, the next increment has zero conditional mean given the past. Azuma--Hoeffding inequality states that a martingale with bounded increments concentrates around its initial value~\cite{roch_mdp_2024}. In the interval form, if the increment $X_t-X_{t-1}$ lies almost surely in an interval of length $c_t$, then for any $\delta>0$,
\begin{equation}
\label{eq:interval_azuma}
  \Pr[X_T-X_0 \le -\delta]
  \le
  \exp\!\left(
    -\frac{2\delta^2}{\sum_{t=1}^{T} c_t^2}
  \right).
\end{equation}
We will construct such a martingale from centered watermark evidence.

\noindent\textbf{Bounded local evidence via clipping.}
Recall from Equation~\eqref{eq:local_llr} that the local LLR
$\Lambda_t^{(j)}$ is obtained by clipping the bit-$1$ posterior to
$[\epsilon,1-\epsilon]$ before taking the log ratio. This clipping
yields the uniform bound
\begin{equation}
\label{eq:clipped_llr_bound}
  \bigl|\Lambda_t^{(j)}\bigr|
  \le
  B_\epsilon
  \triangleq
  \log\frac{1-\epsilon}{\epsilon},
\end{equation}
which provides the bounded-increment condition required by
Azuma--Hoeffding. The clipping is applied only during extraction
and does not affect the embedding distribution.

Let $c_i\in\{0,1\}$ denote the true encoded bit at position $i$. We
align the sign of each local LLR with $c_i$ and normalize it as
\begin{equation}
\label{eq:aligned_z}
  Z_t
  \triangleq
  \frac{(2c_i-1)\Lambda_t}{B_\epsilon}
  \in [-1,1],
\end{equation}
so that $Z_t>0$ supports the correct bit while $Z_t<0$ is misleading. The normalized aggregate statistic is
\begin{equation}
\label{eq:stat_LT}
  \mathcal{L}
  \triangleq
  \sum_{t\in\mathcal{T}_i} Z_t .
\end{equation}
Under the decision rule in Equation~\eqref{eq:hard_decision}, the
bit-error event is contained in $\{\mathcal{L} \le 0\}$.

\noindent\textbf{Martingale construction.}
Let $\{\mathcal{G}_t\}_{t\ge 0}$ be the verifier-side filtration over
token steps. For each $t\in\mathcal{T}_i$, define
\begin{equation}
\label{eq:mt_def}
  m_t \triangleq \mathbb{E}[Z_t \mid \mathcal{G}_{t-1}],
\end{equation}
and let
\begin{equation}
\label{eq:xt_def}
  X_T \triangleq \sum_{t\in\mathcal{T}_i} (Z_t - m_t).
\end{equation}
Indexing $\mathcal{T}_i$ in chronological order, the partial sums of
$\{Z_t - m_t\}_{t\in\mathcal{T}_i}$ form a martingale: by definition
of $m_t$,
\[
  \mathbb{E}[Z_t - m_t \mid \mathcal{G}_{t-1}]
  = \mathbb{E}[Z_t \mid \mathcal{G}_{t-1}] - m_t
  = m_t - m_t
  = 0,
\]
and $Z_t - m_t \in [-1 - m_t,\, 1 - m_t]$ has length $2$ since
$Z_t \in [-1, 1]$. Thus Equation~\eqref{eq:interval_azuma} applies
with $c_t = 2$ over the $T = |\mathcal{T}_i|$ evidence steps, and the
aggregate decomposes as
\begin{equation}
\label{eq:LT_xt_relation}
  \mathcal{L} = \sum_{t\in\mathcal{T}_i} m_t + X_T ,
\end{equation}
so it remains to identify the predictable drift $\sum_t m_t$.

\noindent\textbf{Structural survival.}
For each generation step $t\in\mathcal{T}_i$, let
$\mathsf{Val}_t \in \{0,1\}$ indicate whether the evidence at step
$t$ is synchronized, meaning that both the length-$W$ context
window and the current token survive editing. Define the
bit-level survival rate
\begin{equation}
\label{eq:rho_bit_def}
  \rho
  \triangleq
  \frac{1}{T}
  \sum_{t\in\mathcal{T}_i}
  \mathbb{E}[\mathsf{Val}_t\mid\mathcal{G}_{t-1}],
\end{equation}
which captures the average conditional rate at which synchronized
evidence remains available to bit $i$ after editing. The reliability
bounds below are stated directly in terms of $\rho$.

\noindent\textbf{Main bound under TM0.}
Define the synchronized conditional mean
\begin{equation}
\label{eq:mu_defs}
  \mu_c
  \triangleq
  \frac{
  \sum_{t\in\mathcal{T}_i}
  \mathbb{E}[\mathsf{Val}_t Z_t\mid\mathcal{G}_{t-1}]
  }{
  \sum_{t\in\mathcal{T}_i}
  \mathbb{E}[\mathsf{Val}_t\mid\mathcal{G}_{t-1}]
  } .
\end{equation}
The quantity $\mu_c\in(0,1]$ is the average aligned evidence strength on synchronized units. Under TM0, unsynchronized observations do not systematically favor either bit value, so their LLR contributions are centered. Appendix~\ref{app:tm0_centeredness} formalizes this symmetry argument. Hence
\begin{equation}
\label{eq:tm0_centeredness}
\mathbb{E}[Z_t \mid \mathcal{G}_{t-1},\mathsf{Val}_t=0]
  =
  0 ,
\end{equation}
If no synchronized unit remains, we take $\rho\mu_c=0$.

By conditional decomposition,
\[
m_t
=
\mathbb{E}[\mathsf{Val}_t Z_t\mid\mathcal{G}_{t-1}]
+
\mathbb{E}[(1-\mathsf{Val}_t)Z_t\mid\mathcal{G}_{t-1}].
\]
Equation~\eqref{eq:tm0_centeredness} makes the second term zero.
Therefore, the predictable drift satisfies
\begin{equation}
\label{eq:tm0_drift_sum}
  \sum_{t\in\mathcal{T}_i} m_t
  =
  T\rho\mu_c .
\end{equation}
For the probability bound, let $\Gamma>0$ be a uniform lower bound on
the average drift:
\begin{equation}
\label{eq:gamma_def}
  \rho\mu_c \ge \Gamma > 0
  \quad\text{almost surely}.
\end{equation}

\begin{proposition}[Exponential reliability under TM0]
\label{prop:azuma_tool_tm0}
If Equation~\eqref{eq:gamma_def} holds, then
\begin{equation}
\label{eq:tm0_bound}
  \Pr(\mathcal{L} \le 0)
  \le
  \exp\!\left(
  -\frac{T\Gamma^2}{2}
  \right).
\end{equation}
\end{proposition}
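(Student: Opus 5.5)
The plan is to combine the decomposition in Equation~\eqref{eq:LT_xt_relation} with the drift identity in Equation~\eqref{eq:tm0_drift_sum}, and then apply the interval Azuma--Hoeffding inequality from Equation~\eqref{eq:interval_azuma} to the centered martingale part. Concretely, we write $\mathcal{L}=\sum_{t\in\mathcal{T}_i} m_t + X_T = T\rho\mu_c + X_T$. On this event, Equation~\eqref{eq:gamma_def} gives $T\rho\mu_c\ge T\Gamma$ almost surely. Hence $\{\mathcal{L}\le 0\}\subseteq\{X_T\le -T\rho\mu_c\}\subseteq\{X_T\le -T\Gamma\}$. The error event is thereby reduced to a lower-tail deviation of the martingale by the deterministic amount $\delta=T\Gamma$.

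Next we invoke the martingale structure already established. Indexing $\mathcal{T}_i$ chronologically, the partial sums of $Z_t-m_t$ form a martingale with $X_0=0$. Each increment lies in $[-1-m_t,1-m_t]$, which is a predictable interval of length $c_t=2$. Plugging $\delta=T\Gamma$ and $\sum_t c_t^2=4T$ into Equation~\eqref{eq:interval_azuma} gives
\[
\Pr(\mathcal{L}\le 0)\le \Pr(X_T\le -T\Gamma)\le \exp\!\left(-\frac{2T^2\Gamma^2}{4T}\right)=\exp\!\left(-\frac{T\Gamma^2}{2}\right),
\]
which is Equation~\eqref{eq:tm0_bound}.

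The main obstacle is measure-theoretic. The quantity $\rho\mu_c$ is a random, path-dependent average of conditional expectations, so the drift is random rather than fixed. The deviation threshold must also be deterministic, and $T$ must be treated as fixed. The statement handles this by fixing the evidence count $T$ and assuming the almost-sure lower bound $\Gamma$. With that assumption, the event inclusion holds pointwise, and Azuma is applied only with the constant $\delta=T\Gamma$, so no optional-stopping argument is needed.

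One subtlety remains: steps $t\notin\mathcal{T}_i$ are interleaved with steps in $\mathcal{T}_i$. To handle this, we regard the martingale as indexed by the filtration restricted to evidence times, that is, $\mathcal{G}_{t-1}$ at each evidence step $t$. Each increment is then still conditionally centered given all prior information, including earlier evidence steps, and the tower property preserves the martingale property along the subsequence. The degenerate case where no synchronized unit remains is excluded by $\Gamma>0$, since the convention $\rho\mu_c=0$ would contradict Equation~\eqref{eq:gamma_def}.
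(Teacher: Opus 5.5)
Your proposal is correct and matches the paper's proof: you combine $\mathcal{L}=\sum_t m_t + X_T$ with the TM0 drift identity $\sum_t m_t = T\rho\mu_c \ge T\Gamma$ to get $\{\mathcal{L}\le 0\}\subseteq\{X_T\le -T\Gamma\}$, then apply the interval Azuma--Hoeffding bound with $\delta=T\Gamma$ and $c_t=2$. Your extra remarks on treating $T$ as fixed, using the almost-sure lower bound, and indexing the martingale along the predictable evidence times make explicit what the paper leaves implicit.
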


\begin{proof}
If $\mathcal{L}\le 0$, then by Equations~\eqref{eq:LT_xt_relation} and~\eqref{eq:tm0_drift_sum},
\[
  X_{T}
  =
  \mathcal{L}
  -
  \sum_{t\in\mathcal{T}_i}m_t
  \le
  -T\rho\mu_c
  \le
  -T\Gamma .
\]
Applying Equation~\eqref{eq:interval_azuma} with $\delta=T\Gamma$ and $c_t=2$ yields
\[
  \Pr(\mathcal{L}\le 0)
  \le
  \Pr(X_{T}\le -T\Gamma)
  \le
  \exp\!\left(
  -\frac{2T^2\Gamma^2}{4T}
  \right)
  =
  \exp\!\left(
  -\frac{T\Gamma^2}{2}
  \right).
\]
\end{proof}

\noindent\textbf{Extension to adaptive attackers under TM1.}
Under TM1, the attacker knows the watermarking algorithm but not the key and receives no detection feedback. In this setting, unsynchronized evidence need not be centered. We summarize its average signed contribution by
\begin{equation}
\label{eq:tm1_centeredness}
  \mu_{\mathrm{adv}}
  \triangleq
  \frac{
  \sum_{t\in\mathcal{T}_i}
  \mathbb{E}[(1-\mathsf{Val}_t)Z_t\mid\mathcal{G}_{t-1}]
  }{
  \sum_{t\in\mathcal{T}_i}
  \mathbb{E}[1-\mathsf{Val}_t\mid\mathcal{G}_{t-1}]
  } .
\end{equation}
We do not assume a fixed sign for $\mu_{\mathrm{adv}}$: a negative value means that the attack creates systematically misleading evidence, while a value near zero means that the attack mainly acts through structural desynchronization. Thus, $\mu_{\mathrm{adv}}$ captures whether desynchronized observations merely add noise or also introduce a directional bias against the true bit.
If all units are synchronized, we take
$(1-\rho)\mu_{\mathrm{adv}}=0$.

For TM1, let $\Gamma>0$ satisfy
\begin{equation}
\label{eq:tm1_gamma_def}
  \rho\mu_c
  +
  (1-\rho)\mu_{\mathrm{adv}}
  \ge \Gamma > 0
  \quad\text{almost surely},
\end{equation}
and the predictable drift satisfies
\begin{equation}
\label{eq:tm1_drift_sum}
  \sum_{t\in\mathcal{T}_i} m_t
  =
  T\rho\mu_c
  +
  T(1-\rho)\mu_{\mathrm{adv}}
  \ge
  T\Gamma .
\end{equation}

\begin{proposition}[Drift-penalized bound under TM1]
\label{prop:tm1_bound}
If Equation~\eqref{eq:tm1_gamma_def} holds, then
\begin{equation}
\label{eq:tm1_bound}
  \Pr(\mathcal{L} \le 0)
  \le
  \exp\!\left(
  -\frac{T\Gamma^2}{2}
  \right).
\end{equation}
\end{proposition}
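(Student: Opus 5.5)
The argument mirrors the proof of Proposition~\ref{prop:azuma_tool_tm0}. The only change is that the centeredness property \eqref{eq:tm0_centeredness} is no longer available, so the unsynchronized contribution must be kept inside the drift instead of being set to zero.

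First, for each $t\in\mathcal{T}_i$ I use the conditional decomposition
\[
m_t=\mathbb{E}[\mathsf{Val}_tZ_t\mid\mathcal{G}_{t-1}]+\mathbb{E}[(1-\mathsf{Val}_t)Z_t\mid\mathcal{G}_{t-1}],
\]
and sum it over $\mathcal{T}_i$. The first sum is handled by the definitions of $\rho$ in \eqref{eq:rho_bit_def} and $\mu_c$ in \eqref{eq:mu_defs}: it equals $\mu_c\sum_t\mathbb{E}[\mathsf{Val}_t\mid\mathcal{G}_{t-1}]=T\rho\mu_c$. The second sum is handled by the definition of $\mu_{\mathrm{adv}}$ in \eqref{eq:tm1_centeredness}: it equals $\mu_{\mathrm{adv}}\sum_t\mathbb{E}[1-\mathsf{Val}_t\mid\mathcal{G}_{t-1}]=T(1-\rho)\mu_{\mathrm{adv}}$. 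This verifies the identity in \eqref{eq:tm1_drift_sum}.

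The degenerate cases need a separate check, because there the ratio defining $\mu_c$ or $\mu_{\mathrm{adv}}$ has a zero denominator. If no synchronized evidence survives, the first sum is identically zero, which matches the convention $\rho\mu_c=0$. If all evidence is synchronized, the second sum vanishes, which matches $(1-\rho)\mu_{\mathrm{adv}}=0$. Combining the identity with assumption \eqref{eq:tm1_gamma_def} gives $\sum_t m_t\ge T\Gamma$ almost surely.

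Next, the martingale part is unchanged from the TM0 analysis. Its construction via \eqref{eq:mt_def}--\eqref{eq:xt_def} uses only $Z_t\in[-1,1]$, which holds by the clipping bound \eqref{eq:clipped_llr_bound}, and the definition of $m_t$. It does not depend on the threat model, so each increment $Z_t-m_t$ has zero conditional mean and lies in an interval of length $2$.

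Then I conclude exactly as in the TM0 case. By \eqref{eq:LT_xt_relation}, on the event $\{\mathcal{L}\le 0\}$ we have
\[
X_T=\mathcal{L}-\sum_t m_t\le -T\Gamma .
\]
Applying \eqref{eq:interval_azuma} with $\delta=T\Gamma$ and $c_t=2$ gives
\[
\Pr(\mathcal{L}\le 0)\le\exp\!\left(-\frac{2T^2\Gamma^2}{4T}\right)=\exp\!\left(-\frac{T\Gamma^2}{2}\right).
\]

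The main obstacle is not the concentration step but the legitimacy of treating the drift bound as holding almost surely. In TM1 the attacker's edits may depend on the realized text, so $\rho$, $\mu_c$ and $\mu_{\mathrm{adv}}$ are random and predictable, not constants. I would handle this by noting that $\sum_t m_t\ge T\Gamma$ holds pointwise under \eqref{eq:tm1_gamma_def}, so the event inclusion $\{\mathcal{L}\le0\}\subseteq\{X_T\le -T\Gamma\}$ holds pathwise. Azuma--Hoeffding is then applied to $X_T$ alone, which needs only the bounded-increment and zero-conditional-mean properties.

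A secondary point is the index set. $\mathcal{T}_i$ and $T$ must be treated as fixed, or as determined in advance, since the bound is stated for a fixed evidence count $T$. With that convention the chronological indexing of $\mathcal{T}_i$ gives a well-defined martingale with exactly $T$ increments.
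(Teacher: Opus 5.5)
Your proposal is correct and follows the same route as the paper. On $\{\mathcal{L}\le 0\}$, the drift bound \eqref{eq:tm1_drift_sum} forces $X_T\le -T\Gamma$, and the interval form of Azuma--Hoeffding with $\delta=T\Gamma$ and $c_t=2$ gives $\exp(-T\Gamma^2/2)$. Your derivation of the drift identity from the conditional decomposition, including the zero-denominator conventions, fills in a step the paper only asserts, and your pathwise event inclusion is what legitimately lets the almost-sure lower bound on the random drift feed into Azuma.
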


\begin{proof}
If $\mathcal{L}\le 0$, then by Equations~\eqref{eq:LT_xt_relation} and~\eqref{eq:tm1_drift_sum}, $X_{T}\le -T\Gamma$. Applying Equation~\eqref{eq:interval_azuma} with $\delta=T\Gamma$ and $c_t=2$ gives the bound.
\end{proof}

The TM0 and TM1 results share the same form and are controlled by a positive lower bound $\Gamma$ on the average drift. Under TM0, unsynchronized evidence is centered, so the drift is $\rho\mu_c$. Under TM1, the attacker can reduce it either by lowering the structural survival rate $\rho$ or by pushing $\mu_{\mathrm{adv}}$ below zero through directionally misleading evidence. The bound therefore separates robustness into a structural channel and a directional evidence channel, and explains why soft aggregation remains effective when enough synchronized evidence survives and the remaining unsynchronized evidence does not impose a strong negative drift.
\section{Experiments}

We evaluate PURA from three perspectives: high-capacity message recovery, robustness to post-editing and semantic rewriting attacks, and verification efficiency in asymmetric deployment. We also conduct component-level ablations to isolate the effect of key design choices. Additional results appear in Appendix~\ref{app:ex}.

\subsection{Experimental Setup}

\noindent\textbf{Datasets.}
We evaluate PURA on C4~\cite{rael_exploring_nodate}, Essays~\cite{ivypanda2024}, and OpenGen~\cite{merity_pointer_2016}. Unless stated otherwise, C4 is the default dataset.

\noindent\textbf{Models and source-proxy pairs.}
To evaluate asymmetric verification, we use source-proxy model pairs
from four model families, listed as
source\,/\,proxy: Llama-3
(8B\,/\,3B)~\cite{grattafiori_llama_2024}, Gemma-2
(9B\,/\,2B)~\cite{gemma2_2024}, BLOOM
(7.1B\,/\,3B)~\cite{bloom_2022}, and Pythia
(6.9B\,/\,2.8B)~\cite{pythia_icml_2023}. In the asymmetric setting,
watermarked text is generated by the source model and verified by the
proxy model. In the symmetric setting, the source model itself acts
as the verifier. All pairs share the same tokenizer and vocabulary.
Unless noted otherwise, we report asymmetric results and use the
Llama-3 family by default. 

\noindent\textbf{Baselines.}
We compare PURA with recent multi-bit watermarking methods and divide them into two groups according to whether they bias the sampling distribution. The biased baselines are RS-BH~\cite{qu2024via_code}, MPAC~\cite{yoo2023MPAC}, and CTWL~\cite{DBLP:conf/iclr/WangYC0LM0024}. The unbiased baselines are CycleShift~\cite{fernandez2023cycleshift}, BiMark~\cite{feng2025bimark}, and StealthInk~\cite{jiang2025stealthink}.

\noindent\textbf{Configuration.}
We standardize the context window size to $W=2$ for methods that use context-aware hashing. Unless stated otherwise, we embed 12--36 payload bits, excluding parity bits, per 200 generated tokens. PURA uses the cosine kernel by default, with 3 parity blocks for 12--30 bits and 4 parity blocks for 36 bits. The sampling temperature is $1.0$, and the maximum generation length is 200 tokens. We report match rate, bit accuracy, and per-sample verification time. Match rate is the fraction of test samples whose recovered payload exactly matches the embedded payload. Bit accuracy is the fraction of correctly recovered payload bits, computed for each sample and then averaged over the test set. Baseline hyperparameters follow their original papers or implementations.

\subsection{Main Results: Capacity, Accuracy, and Efficiency}
\label{subsec:main_results}

\begin{table*}[t]
    \centering
    \caption{Capacity, recovery accuracy, and per-sample verification time across payload lengths. 
    Best results are shown in bold and second-best results are underlined. 
    ``/'' marks an infeasible setting and is excluded from ranking.}
    \label{tab:comparison_table}
    \resizebox{\linewidth}{!}{
    \begin{tabular}{l|ccc|ccc|ccc|ccc|ccc}
        \toprule
        \multirow{2}{*}{Method}
        & \multicolumn{3}{c|}{12 bits}
        & \multicolumn{3}{c|}{18 bits}
        & \multicolumn{3}{c|}{24 bits}
        & \multicolumn{3}{c|}{30 bits}
        & \multicolumn{3}{c}{36 bits} \\
        \cmidrule(lr){2-4}\cmidrule(lr){5-7}
        \cmidrule(lr){8-10}\cmidrule(lr){11-13}\cmidrule(lr){14-16}
        & \shortstack{Match\\(\%)} & \shortstack{Bit Acc.\\(\%)} & \shortstack{Time\\(s)}
        & \shortstack{Match\\(\%)} & \shortstack{Bit Acc.\\(\%)} & \shortstack{Time\\(s)}
        & \shortstack{Match\\(\%)} & \shortstack{Bit Acc.\\(\%)} & \shortstack{Time\\(s)}
        & \shortstack{Match\\(\%)} & \shortstack{Bit Acc.\\(\%)} & \shortstack{Time\\(s)}
        & \shortstack{Match\\(\%)} & \shortstack{Bit Acc.\\(\%)} & \shortstack{Time\\(s)} \\
        \midrule
        \multicolumn{16}{l}{Biased baselines} \\
        RS-BH~\cite{qu2024via_code} ($\delta=3.0$)
        & 98.7 & 99.64 & \second{0.07} & 97.9 & 99.46 & 0.09
        & 95.9 & 98.83 & 0.09 & 90.5 & 97.48 & 0.11
        & 81.4 & 95.82 & 0.18 \\
        RS-BH~\cite{qu2024via_code} ($\delta=2.0$)
        & 93.3 & 98.26 & \second{0.07} & 82.7 & 95.17 & \second{0.07}
        & 60.4 & 88.82 & 0.09 & 35.4 & 81.60 & 0.11
        & 20.8 & 80.45 & 0.17 \\
        MPAC~\cite{yoo2023MPAC} ($\delta=3.0$)
        & 99.1 & 99.77 & 0.08 & 96.2 & 99.52 & 0.08
        & 87.2 & 99.02 & \second{0.08} & 70.6 & 98.17 & \second{0.08}
        & 52.3 & 96.05 & \second{0.09} \\
        MPAC~\cite{yoo2023MPAC} ($\delta=2.0$)
        & 91.8 & 98.77 & 0.08 & 71.0 & 96.99 & 0.08
        & 41.3 & 94.71 & \second{0.08} & 18.5 & 92.09 & \second{0.08}
        & 5.6  & 85.30 & \second{0.09} \\
        CTWL~\cite{DBLP:conf/iclr/WangYC0LM0024} ($\delta=2.0$)
        & \second{99.3} & 99.67 & 6.35 & 97.0 & 98.57 & 7.12
        & / & / & / & / & / & / & / & / & / \\
        \midrule
        \multicolumn{16}{l}{Unbiased baselines} \\
        CycleShift~\cite{fernandez2023cycleshift}
        & 98.6 & 99.28 & 2.97 & 97.8 & 98.89 & 15.02
        & / & / & / & / & / & / & / & / & / \\
        BiMark~\cite{feng2025bimark}
        & 97.2 & 99.45 & \second{0.07} & 87.8 & 98.94 & \second{0.07}
        & 65.0 & 97.62 & 0.09 & 50.0 & 96.97 & 0.12
        & 28.9 & 94.12 & 0.13 \\
        StealthInk~\cite{jiang2025stealthink}
        & 60.0 & 96.21 & 0.08 & 25.5 & 93.29 & 0.08
        & 6.5  & 89.36 & \second{0.08} & 2.0  & 88.20 & \second{0.08}
        & 0.5  & 80.15 & \second{0.09} \\
        \midrule
        \multicolumn{16}{l}{Our method, unbiased} \\
        PURA (asymmetric)
        & 99.1 & \second{99.93} & \best{0.05}
        & \second{98.6} & \second{99.71} & \best{0.05}
        & \second{97.1} & \second{99.51} & \best{0.05}
        & \second{95.8} & \second{99.28} & \best{0.05}
        & \second{91.7} & \second{99.02} & \best{0.05} \\
        PURA (symmetric)
        & \best{99.8} & \best{99.97} & 0.12
        & \best{99.6} & \best{99.95} & 0.12
        & \best{99.4} & \best{99.84} & 0.12
        & \best{99.0} & \best{99.82} & 0.12
        & \best{97.2} & \best{99.39} & 0.12 \\
        \bottomrule
    \end{tabular}
    }
\end{table*}

We evaluate PURA across different payload lengths, datasets, and model families. The results are summarized in Table~\ref{tab:comparison_table} and Figure~\ref{fig:generalizability}.

\begin{figure*}[!tbp]
\centering
\includegraphics[width=\linewidth]{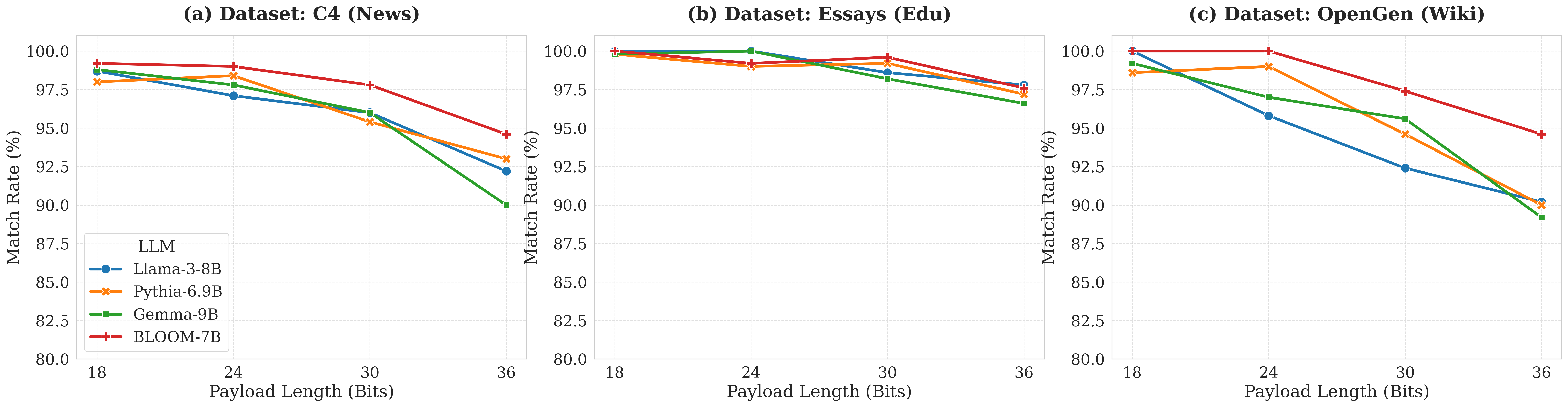}
\caption{Extraction accuracy as a function of payload length across datasets and model families.}
\Description{Three line charts for C4, Essays, and OpenGen show match rates across four language models. Accuracy declines as payload length increases from 18 to 36 bits, with Essays remaining the most stable and BLOOM generally strongest on C4 and OpenGen.}
\label{fig:generalizability}
\end{figure*}

\noindent\textbf{Scaling to high payloads.}
We sweep the payload length from 12 to 36 bits per 200 tokens. Several baselines remain effective at low payloads, but recovery degrades sharply as the payload grows. At 36 bits, StealthInk reaches only 0.5\% match rate, BiMark drops to 28.9\%, and the strongest biased baseline, RS-BH with $\delta=3.0$, falls to 81.4\%. PURA-asymmetric retains 91.7\% match rate and 99.02\% bit accuracy in the same regime, showing that soft-evidence aggregation remains stable under high payload density.

\noindent\textbf{Generalization across models and domains.}
Figure~\ref{fig:generalizability} shows that PURA maintains a high match rate across all configurations in four model families and three datasets. Within each family, the alignment between proxy and source models is strong enough to support stable extraction across parameter scales. Essays is the most stable domain, with match rate above 95\% even at 36 bits. This suggests that the entropy gate retains more usable evidence steps on this domain, giving the soft-evidence aggregator more aligned observations for recovery.

\noindent\textbf{Verification efficiency.}
PURA achieves millisecond-level verification, and the runtime is nearly invariant to payload length, which is comparable to black-box detection baselines. By contrast, the decoding cost of CycleShift and CTWL grows sharply with payload size. For CTWL, per-sample decoding reaches the minute level beyond 18 bits, so we do not evaluate these methods at higher payloads. The efficiency of PURA comes from its vectorized detector: one GPU forward pass produces all next-token distributions, after which lightweight interval integration and simple reductions recover the full payload.

\subsection{Text Quality and Stealth}
\label{subsec:quality_stealthiness}

A core requirement of unbiased watermarking is that it remain imperceptible to human readers and statistically indistinguishable to automated detectors. We evaluate the distributional similarity and stealth of PURA relative to both biased and unbiased methods. The results are shown in Table~\ref{tab:downstream_stealthiness}.

For detectability, we use a RoBERTa-base~\cite{yang2020tscsw} binary classifier to distinguish watermarked text. PURA's classifier accuracy is close to random guessing, at 50.23\% versus 50.00\% for unwatermarked text, whereas biased schemes are more separable, e.g., MPAC with $\delta=3$ reaches 68.28\%. On summarization and English-to-Romanian translation~\cite{ICLR2024_c5b00c5b}, PURA remains within 0.16 BERTScore, 0.09 ROUGE-1, and 0.43 BLEU of the unwatermarked baseline. In contrast, RS-BH with $\delta=3$ loses 3.22 BERTScore and 5.10 BLEU. BiMark and StealthInk achieve similarly strong quality and stealth, but their high-payload recovery in Table~\ref{tab:comparison_table} is substantially weaker than PURA.

\begin{table}[t]
    \centering
    \caption{Stealth and downstream task quality under different watermarking methods.}
    \label{tab:downstream_stealthiness}
    \resizebox{\linewidth}{!}{
    \begin{tabular}{l|c|ccc|cc}
        \toprule
        \multirow{2}{*}{Method}
        & Stealth
        & \multicolumn{3}{c|}{Text Summarization}
        & \multicolumn{2}{c}{Machine Translation} \\
        \cmidrule(lr){2-2}\cmidrule(lr){3-5}\cmidrule(lr){6-7}
        & Accuracy (\%) $\downarrow$
        & PPL $\downarrow$ & BERT $\uparrow$ & R-1 $\uparrow$
        & BERT $\uparrow$ & BLEU $\uparrow$ \\
        \midrule
        Unwatermarked
        & 50.00 & 5.028 & 32.72 & 38.57 & 56.18 & 22.12 \\
        \midrule
        RS-BH ($\delta=2.0$)
        & 56.60 & 6.229 & 31.12 & 37.14 & 53.97 & 19.83 \\
        RS-BH ($\delta=3.0$)
        & 62.30 & 8.250 & 29.50 & 35.80 & 51.08 & 17.02 \\
        MPAC ($\delta=2.0$)
        & 56.21 & 5.489 & 32.24 & 38.14 & 53.95 & 19.75 \\
        MPAC ($\delta=3.0$)
        & 68.28 & 7.330 & 31.11 & 36.90 & 52.80 & 19.08 \\
        \midrule
        BiMark
        & 49.75 & 5.067 & 32.73 & 38.54 & 56.53 & 21.96 \\
        StealthInk
        & 50.90 & 5.013 & 32.71 & 38.59 & 56.20 & 22.13 \\
        \midrule
        PURA
        & 50.23 & 5.010 & 32.56 & 38.48 & 56.02 & 21.69 \\
        \bottomrule
    \end{tabular}
    }
\end{table}

\subsection{Robustness to Post-Editing Attacks}
\label{subsec:robustness}

\begin{figure*}[t]
    \centering
    \includegraphics[width=\linewidth]{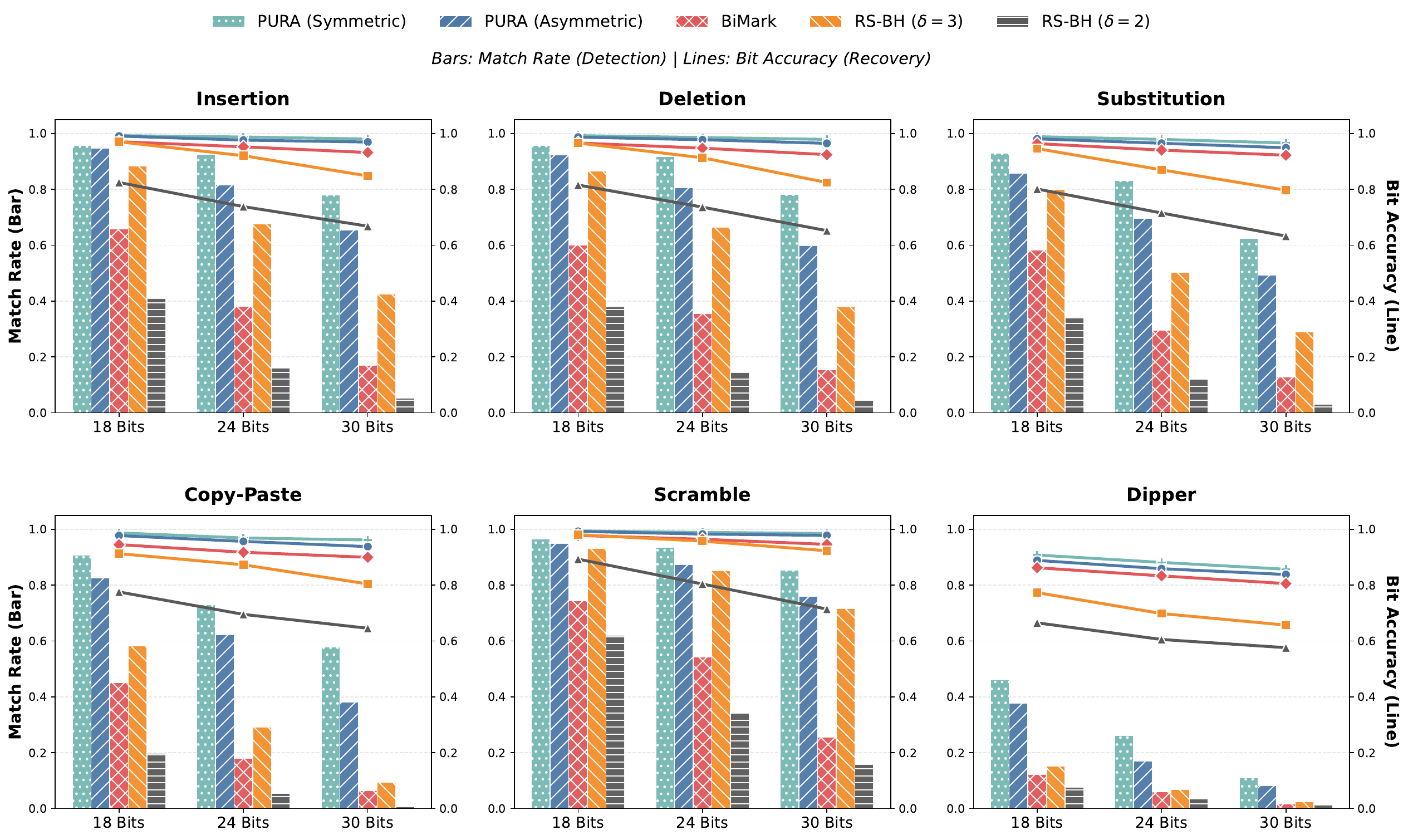}
    \caption{Match rate (bars) and bit accuracy (lines) under six editing attacks across payload lengths.}
    \Description{Six panels compare match-rate bars and bit-accuracy lines under insertion, deletion, substitution, copy-paste, scrambling, and DIPPER attacks at 18, 24, and 30 bits. Both PURA variants remain above the baselines, although match rates decrease as the payload grows.}
    \label{fig:robustness_attacks}
\end{figure*}

\begin{figure*}[t]
    \centering
    \includegraphics[width=\linewidth]{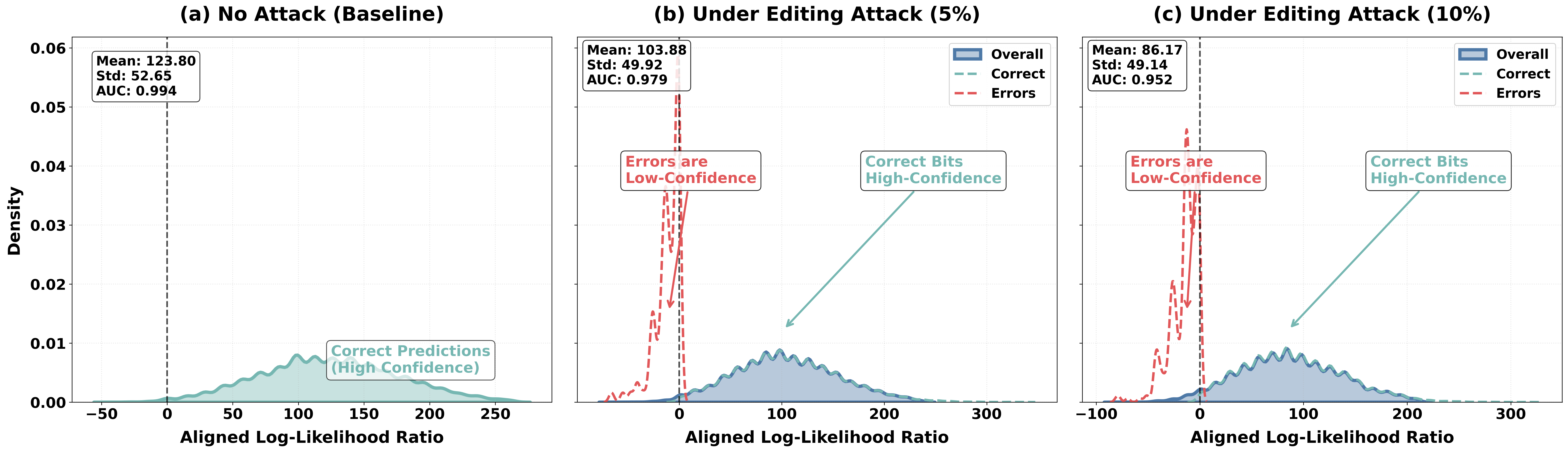}
    \caption{Aligned-LLR density for correct and incorrect bits at 24-bit payload under increasing edit strength. The area under the ROC curve (AUC) is $\Pr(|\Lambda_{\mathrm{correct}}| > |\Lambda_{\mathrm{incorrect}}|)$.}
\Description{Three density plots show aligned LLRs with no attack and with editing ratios of 5 and 10 percent. Correct bits remain concentrated at positive high-confidence values, while errors cluster near or below zero, and the separation decreases with attack strength but remains pronounced.}
    \label{fig:llr_density}
\end{figure*}

In realistic deployment, watermarked text often undergoes post-processing that ranges from local edits to global semantic rewriting. We evaluate PURA under six attacks: token-level insertion, deletion, and substitution at a 10\% rate; sentence-level copy-paste, which mixes in unwatermarked samples at a ratio of 0.3; sentence shuffling; and global paraphrasing with DIPPER(20,\,20)~\cite{krishna2023paraphrasing}.

\noindent\textbf{Resilience across attack surfaces.}
Conventional baselines degrade rapidly as the payload increases. At 30 bits, both RS-BH and BiMark have near-zero match rates under DIPPER attacks. In contrast, PURA keeps bit accuracy above 85\% for all attack types even at 36 bits. This resilience comes from the combination of soft confidence scoring and Gray-coded sector mapping. Low-magnitude LLR observations behave like soft erasures in the reliability ranking: they contribute little aligned evidence and are the first candidates for parity-guided repair, while high-confidence observations drive recovery.

\noindent\textbf{Desynchronization and paraphrasing attacks.}
Deletion and copy-paste operations break the continuity of context windows and cause the verifier to lose structural synchronization. PURA mitigates this problem through the small $W=2$ context window and context-derived indexing. Even when the global order is disturbed, the verifier can still extract valid evidence from locally aligned fragments, which is consistent with the higher match rates of PURA in Figure~\ref{fig:robustness_attacks}. The DIPPER attack further tests robustness under semantic-preserving rewriting. Under such strong rewriting, PURA in the asymmetric setting retains substantially higher bit accuracy than the compared baselines.

\subsection{Analysis of Adaptive Attacks}
\label{subsec:adaptive}

We evaluate on raw 18-bit payloads with SPC repair disabled, which
exposes the soft-evidence channel directly, while parity repair would
otherwise partially mask how $\hat{\Gamma}$ translates into bit accuracy.
The reported $\hat{\rho}$, $\hat{\mu}_c$, $\hat{\mu}_{\mathrm{adv}}$,
and $\hat{\Gamma}$ are empirical counterparts of the quantities in
Section~\ref{subsec:robustness_framework}, computed as bit-level
averages over all payload positions and valid test examples, while
Match is the exact recovery rate of the whole 18-bit message.

Under TM1 we cover four existing adaptive rewriting attacks: the
Smoothing Attack~\cite{chang-etal-2025-watermark},
RLCracker~\cite{huang2025RLCracker}, Diaa et
al.~\cite{diaa2025optimizing}, and SIRA~\cite{pmlr-v267-cheng25c},
with all paraphrase-based attacks except DIPPER using
\textsc{Qwen2.5-7B-Instruct}~\cite{qwen2} as the rewriting model.
Reynolds et al.~\cite{reynolds2025breaking} and the detector-guided
variant of Zhang et al.~\cite{zhang2026character} are excluded
because their assumptions, namely a globally fixed permutation with
context-independent keys and detector feedback, are incompatible
with TM1.

\begin{promptbox}{Prompt for Paraphrase Attack}
\label{box:paraphrase_prompt}
\footnotesize
\textbf{System.}\\
You are a careful paraphraser. Rewrite only the given text. Keep the meaning, facts, names, numbers, dates, quotes, URLs, technical terms, and intent unchanged. Do not summarize, omit, add, continue, explain, or label anything.

\vspace{0.5em}
\textbf{User.}
\begin{verbatim}
Rewrite the passage below using different wording.
Requirements:
- Preserve all important facts and meaning.
- Preserve proper nouns, numbers, dates, quoted
  claims, URLs, and technical terms exactly when
  needed.
- Do not summarize.
- Keep the ordering of ideas and factual scope
  aligned with the original passage.
- Improve fluency only if it does not change
  meaning.
- Avoid copying long phrases verbatim.
- Output only the rewritten passage.
Passage:
{text}
\end{verbatim}
\end{promptbox}

\noindent\textbf{Observed failure mode.}
Table~\ref{tab:adaptive_attack_comparison} shows that TM1 rewriting
attacks mainly reduce the structural survival rate $\hat{\rho}$,
lowering Match from $84.78\%$ under random substitution to
$8.69\%$--$22.67\%$, while $\hat{\mu}_{\mathrm{adv}}$ stays within
$[-0.009,\,+0.008]$. In our evaluation, the considered TM1 attacks
do not systematically push observed intervals toward the wrong side
of the Gray-bit partition and primarily degrade recovery through
structural desynchronization. This is consistent with the
decomposition in Section~\ref{subsec:robustness_framework}: current
adaptive rewriting attacks degrade recovery primarily through
desynchronization rather than directional drift.

\begin{table}[t]
\centering
\caption{Robustness decomposition under TM0 and TM1 attacks at 18-bit raw payload, with SPC repair disabled. Margin parameters are defined in Section~\ref{subsec:robustness_framework}.}
\label{tab:adaptive_attack_comparison}
\setlength{\tabcolsep}{5pt}
\resizebox{\linewidth}{!}{%
\begin{tabular}{lcccccc}
\toprule
Configuration
& $\hat{\rho}$
& $\hat{\mu}_c$
& $\hat{\mu}_{\mathrm{adv}}$
& $\hat{\Gamma}$
& Bit Acc. (\%)
& Match (\%) \\
\midrule
\multicolumn{7}{l}{\textbf{TM0: non-adaptive reference attacks}} \\
\quad Substitution ($r=0.1$)
& 0.784 & 0.602 & $-$0.002 & 0.472 & 98.44 & 84.78 \\
\quad DIPPER~\cite{krishna2023paraphrasing}
& 0.396 & 0.512 & $+$0.004 & 0.205 & 90.79 & 37.80 \\
\midrule
\multicolumn{7}{l}{\textbf{TM1: adaptive rewriting attacks}} \\
\quad Smoothing Attack~\cite{chang-etal-2025-watermark}
& 0.371 & 0.485 & $+$0.008 & 0.185 & 85.20 & 22.67 \\
\quad RLCracker~\cite{huang2025RLCracker}
& 0.356 & 0.501 & $-$0.006 & 0.174 & 84.20 & 21.82 \\
\quad Diaa et al.~\cite{diaa2025optimizing}
& 0.290 & 0.471 & $-$0.009 & 0.130 & 78.23 & 8.69 \\
\quad SIRA~\cite{pmlr-v267-cheng25c}
& 0.355 & 0.473 & $-$0.006 & 0.164 & 80.86 & 14.50 \\
\bottomrule
\end{tabular}%
}
\end{table}

\begin{figure}[t]
    \centering
    \includegraphics[width=\linewidth]{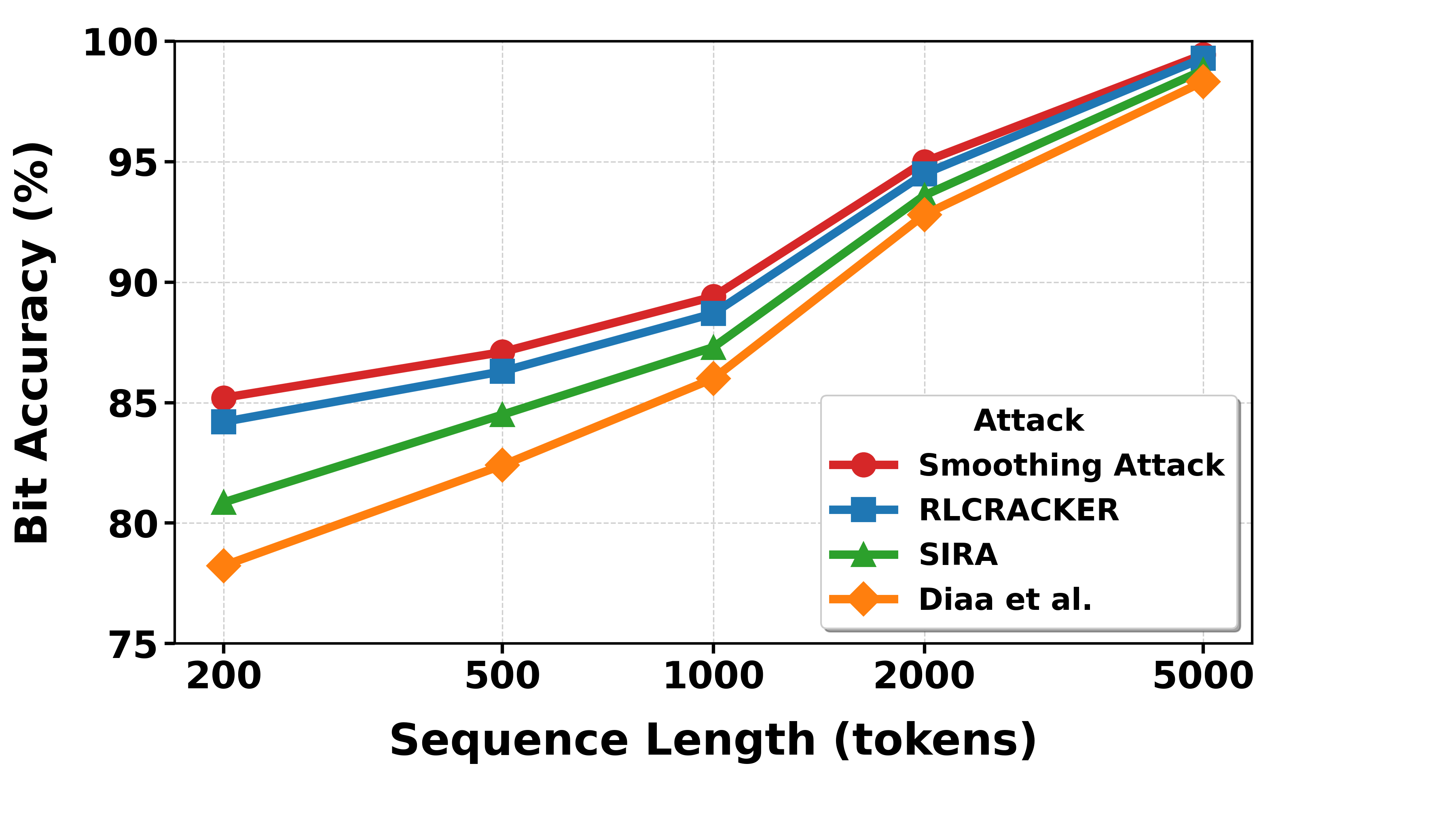}
    \caption{Bit accuracy under TM1 rewriting attacks as a function of
    sequence length.}
    \Description{Four rising curves show that bit accuracy under Smoothing Attack, RLCracker, SIRA, and Diaa et al. improves with sequence length from 200 to 5000 tokens, with all four approaching 100 percent at the longest length.}
    \label{fig:length_scaling_tm1}
\end{figure}

\noindent\textbf{Scaling with sequence length.}
We extend the four TM1 attacks to sequence lengths of 200--5000 tokens
to examine the length-dependent term in
Proposition~\ref{prop:tm1_bound}. Results are shown in
Figure~\ref{fig:length_scaling_tm1}. Bit accuracy rises monotonically
with sequence length, and all four attacks converge above $98\%$ at
5000 tokens, including the strongest one of Diaa et al., which
improves from $78.2\%$ at 200 tokens to $98.3\%$. The trend is
consistent with the exponential decay in
Proposition~\ref{prop:tm1_bound}.

\noindent\textbf{Beyond TM1.}
These experiments do not preclude stronger adaptive strategies in
principle. For example, an attacker with sufficient access and
optimization capability might try to aggregate information across many
generated samples or exploit residual key-correlated structure in order
to induce a negative drift in the directional evidence channel
$\mu_{\mathrm{adv}}$, thereby weakening the condition required by our
robustness bound. At present, however, such key-directed adaptive attacks
do not have a directly reproducible end-to-end implementation, and
turning this theoretical strategy into a practical attack remains an
open problem. To study this regime in a controlled way,
Appendix~\ref{app:oracle_attack} introduces a privileged oracle stress
test. The oracle is strictly outside TM1, but it approximates the
boundary case in which an attacker can directly search for edits that
reduce $\mu_{\mathrm{adv}}$.

\subsection{Ablation Study and Component Analysis}
\label{subsec:analysis}

\begin{figure*}[t]
    \centering
    \captionsetup[subfigure]{skip=4pt}

    \begin{subfigure}[t]{0.49\textwidth}
        \centering
        \includegraphics[width=\linewidth]{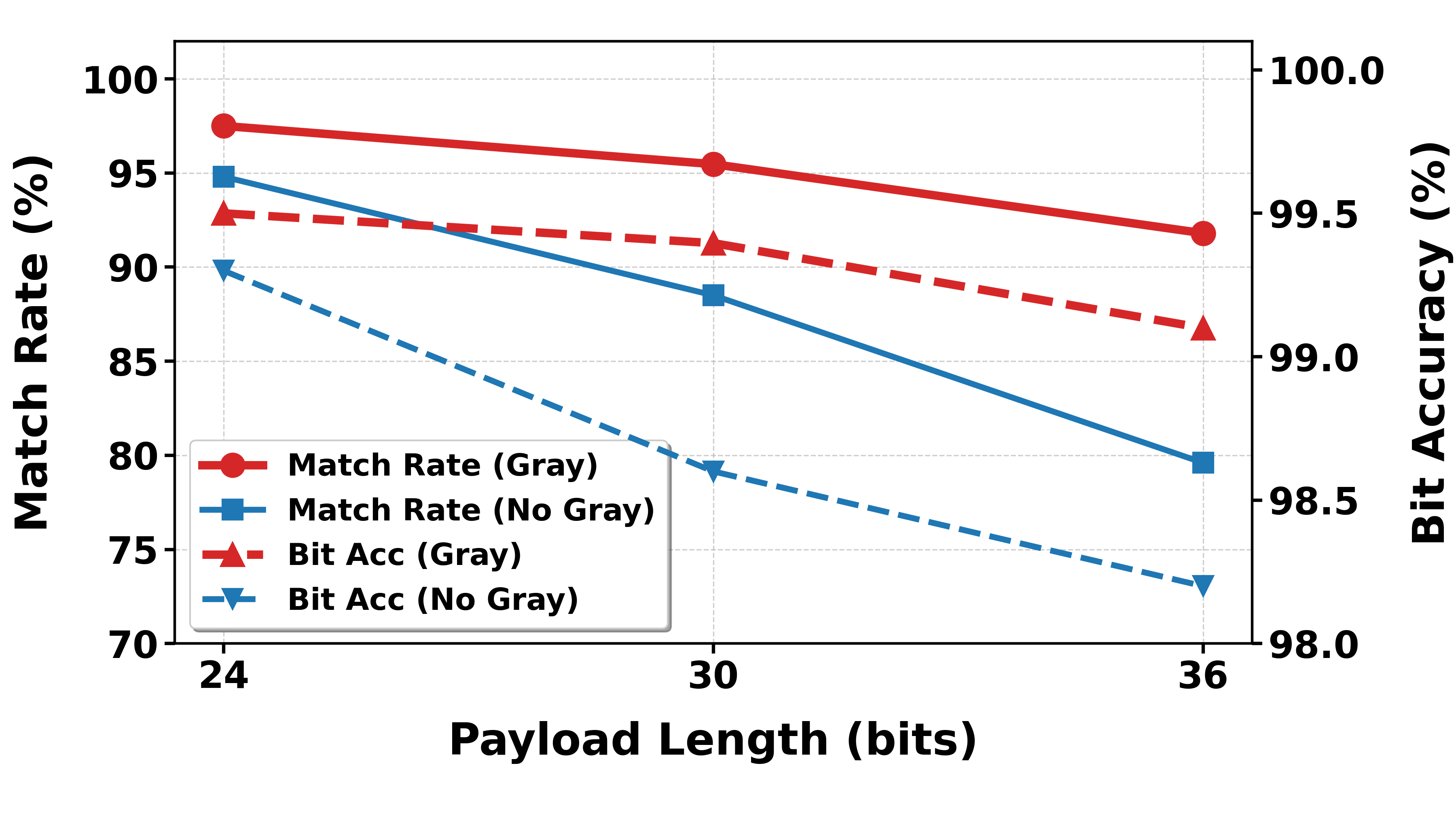}
        \caption{Gray coding vs.\ binary indexing.}
        \label{fig:gray_code_ablation}
    \end{subfigure}
    \hfill
    \begin{subfigure}[t]{0.49\textwidth}
        \centering
        \includegraphics[width=\linewidth]{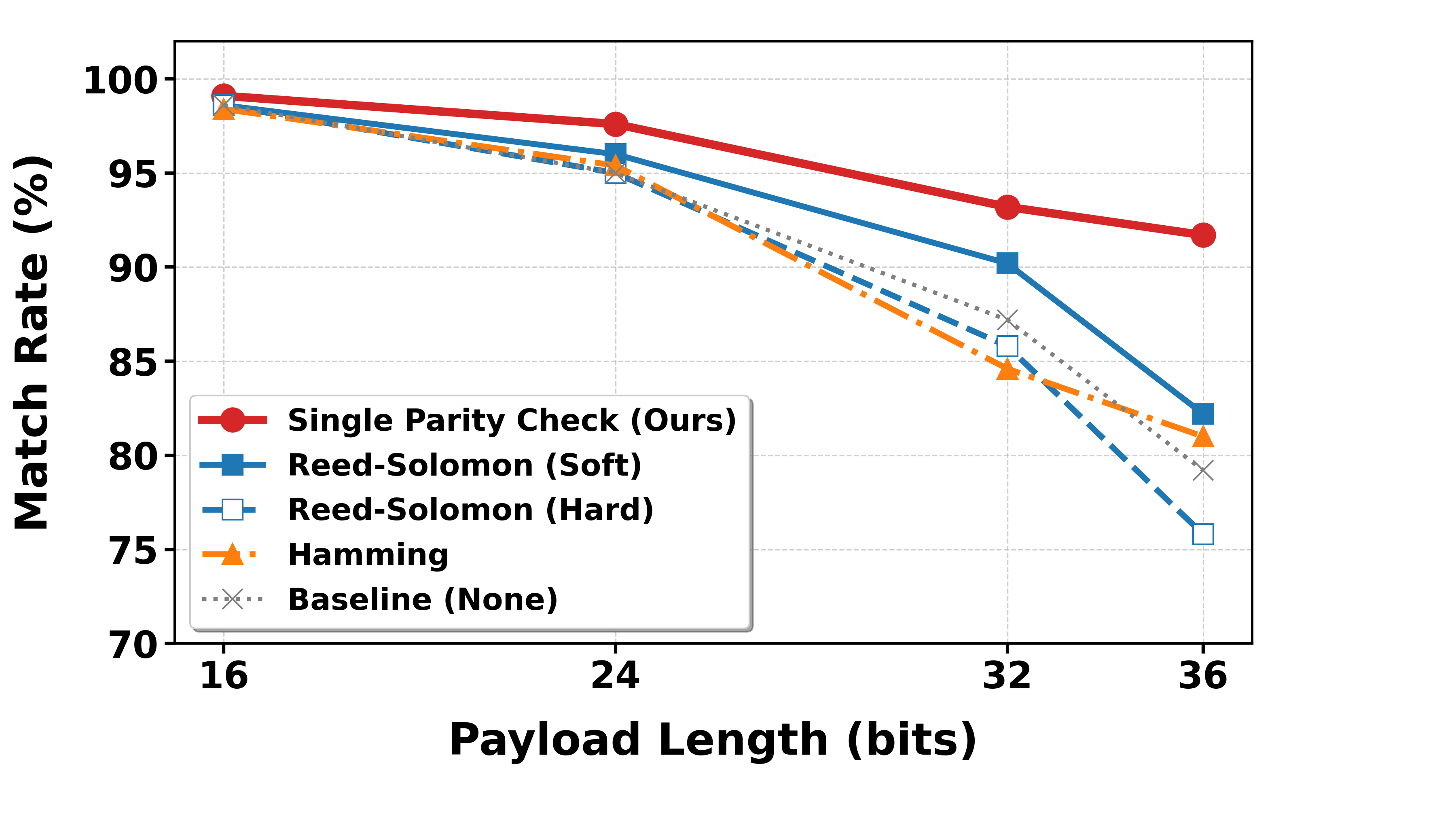}
        \caption{Error-correcting code (ECC) strategy: SPC vs.\ Hamming vs.\ RS.}
        \label{fig:ecc_ablation}
    \end{subfigure}

    \vspace{0.8em}

    \begin{subfigure}[t]{0.49\textwidth}
        \centering
        \includegraphics[width=\linewidth]{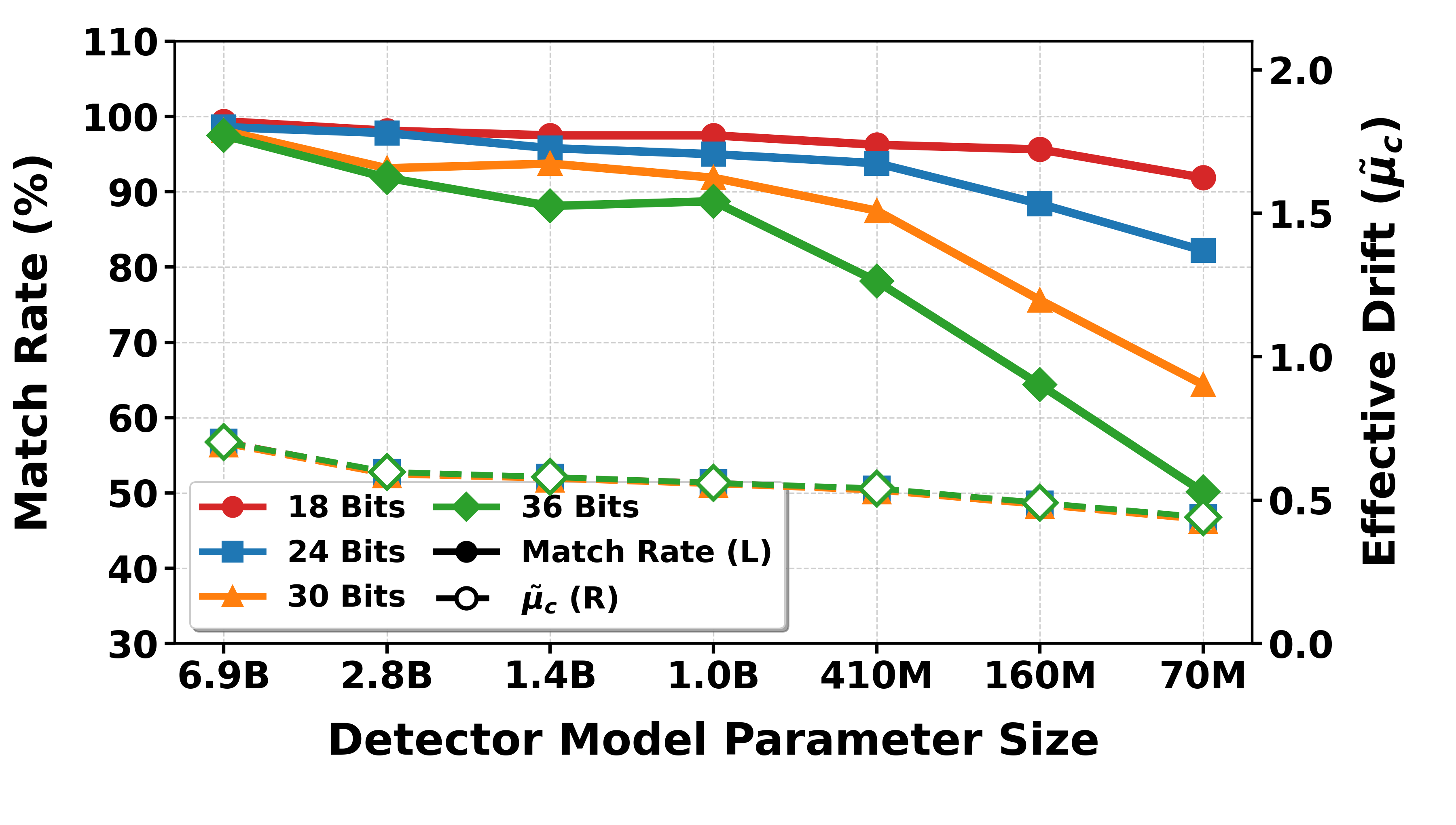}
        \caption{Verifier size: accuracy vs.\ latency.}
        \label{fig:model_ablation}
    \end{subfigure}
    \hfill
    \begin{subfigure}[t]{0.49\textwidth}
        \centering
        \includegraphics[width=\linewidth]{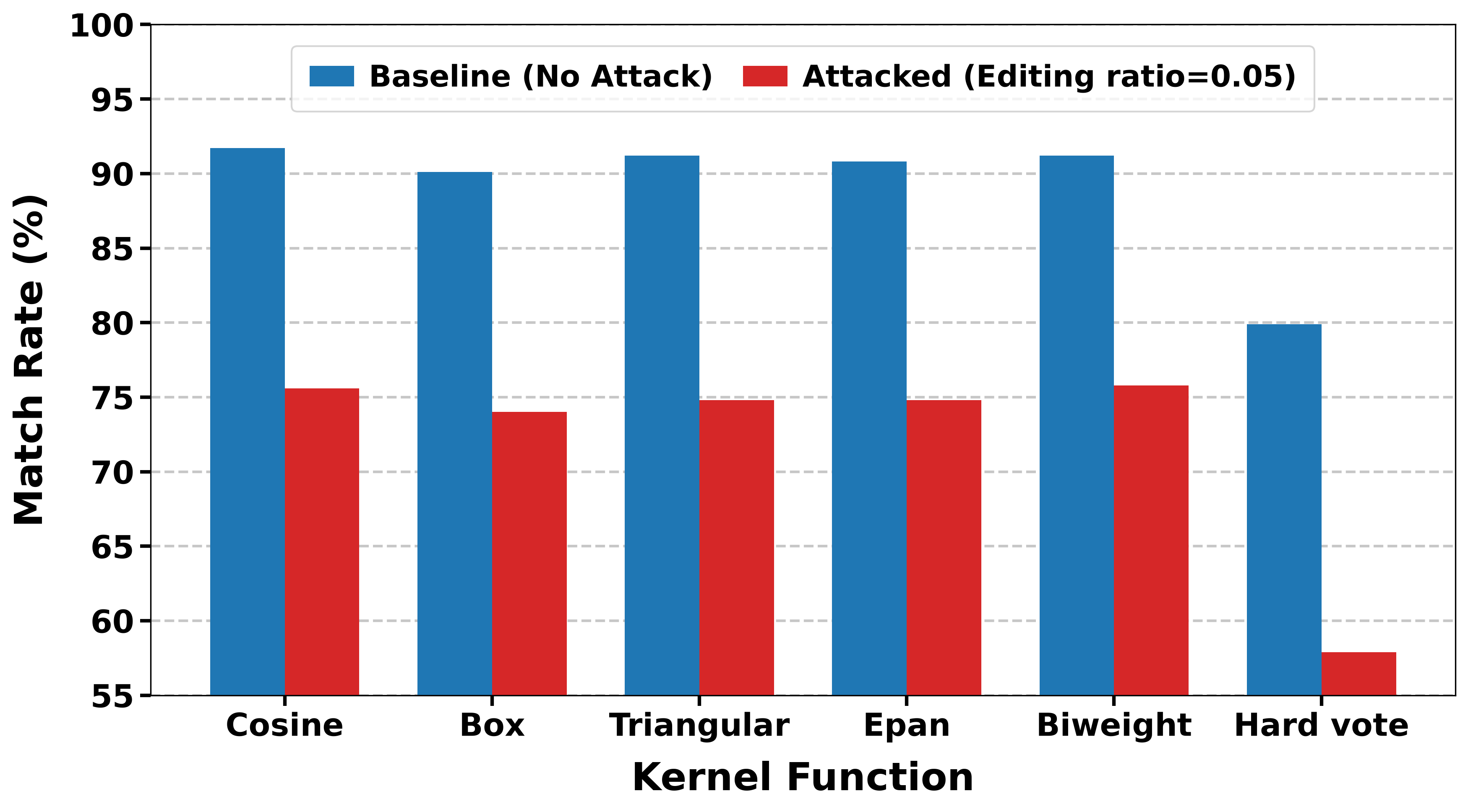}
        \caption{Kernel choice at 36-bit payload.}
        \label{fig:ker_ablation}
    \end{subfigure}

    \caption{Component ablations.}
\Description{Four line-chart panels show that Gray coding and SPC improve match rate, reducing verifier size mainly hurts longer payloads, and smooth kernels perform similarly while hard voting is markedly worse.}
    \label{fig:comprehensive_ablation}
\end{figure*}

We quantify the contribution of each design choice by isolating the main components of the method.

\noindent\textbf{Discriminative power of soft confidence.}
Figure~\ref{fig:llr_density} examines the LLR density. Without attack, correct bits concentrate in regions with large-magnitude LLRs. Under 10\% token-level editing, the distributions move closer to zero overall, and incorrect bits concentrate in the low-confidence region. Even so, the AUC remains 0.952, which shows that LLR magnitude still has strong discriminative power after editing. This supports the SPC-repair design: low-$|\Lambda|$ positions are flipped first, mimicking soft-decision erasure decoding without explicitly discarding evidence.

\noindent\textbf{Effect of Gray coding.}
Figure~\ref{fig:gray_code_ablation} shows that as the payload increases, Gray coding maintains higher match rate and bit accuracy than standard binary indexing. This aligns with the design intuition in Section~\ref{sec:method}: an adjacent-sector error changes only one Gray-code bit.

\noindent\textbf{Error correction: SPC versus classical codes.}
Figure~\ref{fig:ecc_ablation} compares SPC with Hamming-(7,4) and Reed--Solomon RS-(6,4). Classical hard-decision codes can correct at most one erroneous unit but require higher redundancy. RS-Soft extends recovery to two symbols by masking the least reliable locations as erasures, but it introduces extra symbol-level overhead. Because Gray mapping makes errors sparse and typically reduces them to single-bit flips, SPC can achieve strong recovery with very low redundancy by flipping only the bit with the smallest $|\Lambda|$, which matches the soft-decision pipeline particularly well.

\noindent\textbf{Verifier model size in the asymmetric setting.}
Figure~\ref{fig:model_ablation} shows that compact proxy models still maintain a high match rate in low-payload settings of 18--24 bits, which makes verification feasible under tight resource budgets. As the payload increases, the gap between proxy models widens. Proxy models with at least 410M parameters remain stable across payload lengths and are a practical choice for low-resource verifiers. The effective drift $\mu_c$ varies mainly with verifier size and is largely insensitive to payload length. Even under substantial distribution mismatch, $\mu_c$ remains well above zero, around 0.5, satisfying the positive-margin condition used in Proposition~\ref{prop:azuma_tool_tm0}.

\noindent\textbf{Kernel choice for interval evidence.}
Figure~\ref{fig:ker_ablation} compares different scoring rules for sector compatibility. Match rates remain stable across finite-support kernels, including cosine, box, triangular, Epanechnikov(Epan), and biweight kernels. Hard voting, by contrast, degrades sharply under the same perturbations. This highlights the value of soft interval evidence. Kernel integration uses interval geometry to produce graded sector weights instead of reducing each observation to a single nearest-anchor decision. The cosine kernel performs strongly and consistently across settings, so we use it as the default. Formal kernel definitions appear in Appendix~\ref{app:kernels}.

Taken together, Gray coding, soft interval evidence, and SPC play complementary roles. Gray coding localizes bit errors, soft interval evidence preserves reliable signal under mismatch and editing, and SPC provides confidence-guided correction with minimal redundancy. Their interaction is the main reason PURA achieves robust high-capacity extraction.

\section{Conclusion}

We present PURA, an unbiased multi-bit watermarking framework that
encodes payload bits as phase rotations of the latent sampling point
and recovers them by treating each observed token as interval
evidence and aggregating it across the sequence into per-bit
confidence. Experiments show that PURA preserves the generation
distribution and stealth while substantially exceeding existing
baselines in both capacity and recovery accuracy, and remains stable
under a broad range of attacks.

On the theoretical side, we provide the first robustness analysis
framework for multi-bit watermarking that covers adaptive attackers
with knowledge of the watermarking mechanism, yielding a per-bit
error bound that decays exponentially with sequence length.

Combined with millisecond-level asymmetric verification, PURA offers
a practical path toward fine-grained attribution in real-world
deployment.

\section{Limitations and Future Work}
Like existing context-based watermarking schemes, PURA derives keyed
randomness from local contexts. Reusing the same key across responses
with repeated contexts may therefore introduce cross-response
correlations, a common limitation of such schemes. Extending the
current single-invocation guarantee to multiple responses remains
future work.

Recovery also depends on accumulating sufficient usable evidence.
Short texts and extensive rewriting can reduce exact-message recovery,
so the remaining bit-level evidence can support preliminary
auditing and candidate narrowing rather than exact payload recovery.
Our asymmetric evaluation uses source--proxy pairs that share the same
tokenizer and vocabulary and focuses on English text. Future work will
examine cross-tokenizer verification and broader multilingual settings.

\FloatBarrier
\bibliographystyle{ACM-Reference-Format}
\bibliography{ref}

@misc{gilani2026arcmarkdistortionfreemultibytellm,
      title={ArcMark: Distortion-Free Multi-Byte LLM Watermark via Optimal Transport}, 
      author={Atefeh Gilani and Sajani Vithana and Carol Xuan Long and Oliver Kosut and Lalitha Sankar and Flavio P. Calmon},
      year={2026},
      eprint={2602.07235},
      archivePrefix={arXiv},
      primaryClass={cs.LG},
      url={https://arxiv.org/abs/2602.07235}, 
}

@misc{cui2026mc2markdistortionfreemultibitwatermarking,
      title={MC2{$^2$}Mark: Distortion-Free Multi-Bit Watermarking for Long Messages}, 
      author={Xuehao Cui and Ruibo Chen and Yihan Wu and Heng Huang},
      year={2026},
      eprint={2602.14030},
      archivePrefix={arXiv},
      primaryClass={cs.CR},
      url={https://arxiv.org/abs/2602.14030}, 
}

@inproceedings{
zhao2024provable,
title={Provable Robust Watermarking for {AI}-Generated Text},
author={Xuandong Zhao and Prabhanjan Vijendra Ananth and Lei Li and Yu-Xiang Wang},
booktitle={The Twelfth International Conference on Learning Representations},
year={2024},
url={https://openreview.net/forum?id=SsmT8aO45L}
}

@inproceedings{DBLP:conf/iclr/WangYC0LM0024,
  author       = {Lean Wang and
                  Wenkai Yang and
                  Deli Chen and
                  Hao Zhou and
                  Yankai Lin and
                  Fandong Meng and
                  Jie Zhou and
                  Xu Sun},
  title        = {Towards Codable Watermarking for Injecting Multi-Bits Information
                  to LLMs},
  booktitle    = {The Twelfth International Conference on Learning Representations,
                  {ICLR} 2024, Vienna, Austria, May 7-11, 2024},
  publisher    = {OpenReview.net},
  year         = {2024},
  url          = {https://openreview.net/forum?id=JYu5Flqm9D},
  bibsource    = {dblp computer science bibliography, https://dblp.org}
}

@article{qwen2,
      title={Qwen2 Technical Report}, 
      author={An Yang and Baosong Yang and Binyuan Hui and Bo Zheng and Bowen Yu and Chang Zhou and Chengpeng Li and Chengyuan Li and Dayiheng Liu and Fei Huang and Guanting Dong and Haoran Wei and Huan Lin and Jialong Tang and Jialin Wang and Jian Yang and Jianhong Tu and Jianwei Zhang and Jianxin Ma and Jin Xu and Jingren Zhou and Jinze Bai and Jinzheng He and Junyang Lin and Kai Dang and Keming Lu and Keqin Chen and Kexin Yang and Mei Li and Mingfeng Xue and Na Ni and Pei Zhang and Peng Wang and Ru Peng and Rui Men and Ruize Gao and Runji Lin and Shijie Wang and Shuai Bai and Sinan Tan and Tianhang Zhu and Tianhao Li and Tianyu Liu and Wenbin Ge and Xiaodong Deng and Xiaohuan Zhou and Xingzhang Ren and Xinyu Zhang and Xipin Wei and Xuancheng Ren and Yang Fan and Yang Yao and Yichang Zhang and Yu Wan and Yunfei Chu and Yuqiong Liu and Zeyu Cui and Zhenru Zhang and Zhihao Fan},
      journal={arXiv preprint arXiv:2407.10671},
      year={2024}
}

@book{roch_mdp_2024,
place={Cambridge},
series={Cambridge Series in Statistical and Probabilistic Mathematics},
title={Modern Discrete Probability: An Essential Toolkit},
DOI={10.1017/9781009305129},
publisher={Cambridge University Press},
author={Roch, Sebastien},
year={2024},
collection={Cambridge Series in Statistical and Probabilistic Mathematics}
}

@ARTICLE{11329500,
  author={Jiang, Jun and Chen, Kejiang and Zhao, Na and Qi, Yuang and Zhang, Xin and Zhang, Weiming and Yu, Nenghai},
  journal={IEEE Transactions on Multimedia}, 
  title={Disreo: Provably Secure No-Box-Extraction Linguistic Steganography Based on Distribution Reorganization}, 
  year={2026},
  volume={28},
  number={},
  pages={2970-2983},
  doi={10.1109/TMM.2026.3651038}}

@InProceedings{pmlr-v267-cheng25c,
  title = 	 {Revealing Weaknesses in Text Watermarking Through Self-Information Rewrite Attacks},
  author =       {Cheng, Yixin and Guo, Hongcheng and Li, Yangming and Sigal, Leonid},
  booktitle = 	 {Proceedings of the 42nd International Conference on Machine Learning},
  pages = 	 {9982--10009},
  year = 	 {2025},
  editor = 	 {Singh, Aarti and Fazel, Maryam and Hsu, Daniel and Lacoste-Julien, Simon and Berkenkamp, Felix and Maharaj, Tegan and Wagstaff, Kiri and Zhu, Jerry},
  volume = 	 {267},
  series = 	 {Proceedings of Machine Learning Research},
  month = 	 {13--19 Jul},
  publisher =    {PMLR},
  url = 	 {https://proceedings.mlr.press/v267/cheng25c.html}
}

@InProceedings{pmlr-v235-jovanovic24a,
  title = 	 {Watermark Stealing in Large Language Models},
  author =       {Jovanovi\'{c}, Nikola and Staab, Robin and Vechev, Martin},
  booktitle = 	 {Proceedings of the 41st International Conference on Machine Learning},
  pages = 	 {22570--22593},
  year = 	 {2024},
  editor = 	 {Salakhutdinov, Ruslan and Kolter, Zico and Heller, Katherine and Weller, Adrian and Oliver, Nuria and Scarlett, Jonathan and Berkenkamp, Felix},
  volume = 	 {235},
  series = 	 {Proceedings of Machine Learning Research},
  month = 	 {21--27 Jul},
  publisher =    {PMLR},
  url = 	 {https://proceedings.mlr.press/v235/jovanovic24a.html}
}

@inproceedings{chang-etal-2025-watermark,
    title = "Watermark Smoothing Attacks against Language Models",
    author = "Chang, Hongyan  and
      Hassani, Hamed  and
      Shokri, Reza",
    editor = "Christodoulopoulos, Christos  and
      Chakraborty, Tanmoy  and
      Rose, Carolyn  and
      Peng, Violet",
    booktitle = "Findings of the Association for Computational Linguistics: EMNLP 2025",
    month = nov,
    year = "2025",
    address = "Suzhou, China",
    publisher = "Association for Computational Linguistics",
    url = "https://aclanthology.org/2025.findings-emnlp.264/",
    doi = "10.18653/v1/2025.findings-emnlp.264",
    pages = "4915--4941",
    ISBN = "979-8-89176-335-7"
}

@inproceedings{diaa2025optimizing,
  title={Optimizing Adaptive Attacks against Watermarks for Language Models},
  author={Abdulrahman Diaa and Toluwani Aremu and Nils Lukas},
  booktitle={Forty-second International Conference on Machine Learning},
  year={2025},
  url={https://openreview.net/forum?id=AsODat0dkE}
}

@inproceedings{chen2025demark,
  title={{DE-MARK}: Watermark Removal in Large Language Models},
  author={Chen, Ruibo and Wu, Yihan and Guo, Junfeng and Huang, Heng},
  booktitle={Proceedings of the 42nd International Conference on Machine Learning (ICML)},
  year={2025}
}

@article{huang2025rlcracker,
  title={{RLCracker}: Exposing the Vulnerability of LLM Watermarks with Adaptive RL Attacks},
  author={Huang, Hanbo and Zhang, Yiran and Zheng, Hao and Gong, Xuan and Li, Yihan and Liu, Lin and Liang, Shiyu},
  journal={arXiv preprint},
  year={2025}
}

@inproceedings{reynolds2025breaking,
  title={Breaking Distortion-free Watermarks in Large Language Models},
  author={Shayleen Reynolds and Hengzhi He and Dung Daniel Ngo and Saheed Obitayo and Niccolo Dalmasso and Guang Cheng and Vamsi K. Potluru and Manuela Veloso},
  booktitle={Lock-LLM Workshop: Prevent Unauthorized Knowledge Use from Large Language Models},
  year={2025},
  url={https://openreview.net/forum?id=six52ViBmy}
}

@inproceedings{wang2026ANStega,
  title={Breaking the Generative Steganography Trilemma: ANStega for Optimal Capacity, Efficiency, and Security},
  author={Wang, Yaofei and Pang, Weilong and Chen, Kejiang and Ding, Jinyang and Zhang, W and Hu, D and Yu, Nenghai},
  booktitle={33rd Annual Network and Distributed System Security Symposium (NDSS’26), The Internet Society},
  year={2026}
}

@inproceedings{zhang2026character,
  title={Character-Level Perturbations Disrupt {LLM} Watermarks},
  author={Zhang, Zhaoxi and others}, 
  booktitle={Proceedings of the Network and Distributed System Security Symposium (NDSS)},
  year={2026}
}

@article{yang2020tscsw,
  title   = {TS-CSW: text steganalysis and hidden capacity estimation based on convolutional sliding windows},
  author  = {Yang, Zhongliang and Huang, Yongfeng and Zhang, Yu-Jin},
  journal = {Multimedia Tools and Applications},
  volume  = {79},
  pages   = {18293--18316},
  year    = {2020},
  doi     = {10.1007/s11042-020-08716-w}
}

@INPROCEEDINGS{Ding2023Discop,
  author={Ding, Jinyang and Chen, Kejiang and Wang, Yaofei and Zhao, Na and Zhang, Weiming and Yu, Nenghai},
  booktitle={2023 IEEE Symposium on Security and Privacy (SP)}, 
  title={Discop: Provably Secure Steganography in Practice Based on "Distribution Copies"}, 
  year={2023},
  volume={},
  number={},
  pages={2238-2255},
  doi={10.1109/SP46215.2023.10179287}}

@inproceedings{iMEC,
title={Perfectly Secure Steganography Using Minimum Entropy Coupling},
author={Christian Schroeder de Witt and Samuel Sokota and J Zico Kolter and Jakob Nicolaus Foerster and Martin Strohmeier},
booktitle={The Eleventh International Conference on Learning Representations },
year={2023},
url={https://openreview.net/forum?id=HQ67mj5rJdR}
}

@inproceedings{Wang2025SparSamp,
  author = {Yaofei Wang and Gang Pei and Kejiang Chen and Jinyang Ding and Chao Pan and Weilong Pang and Donghui Hu and Weiming Zhang},
  title = {{SparSamp}: Efficient Provably Secure Steganography Based on Sparse Sampling},
  booktitle = {34th USENIX Security Symposium (USENIX Security 25)},
  year = {2025},
  isbn = {978-1-939133-52-6},
  address = {Seattle, WA},
  pages = {6817--6835},
  url = {https://www.usenix.org/conference/usenixsecurity25/presentation/wang-yaofei},
  publisher = {USENIX Association},
  month = aug
}

@inproceedings{Kaptchuk2021Meteor,
author = {Kaptchuk, Gabriel and Jois, Tushar M. and Green, Matthew and Rubin, Aviel D.},
title = {Meteor: Cryptographically Secure Steganography for Realistic Distributions},
year = {2021},
isbn = {9781450384544},
publisher = {Association for Computing Machinery},
address = {New York, NY, USA},
url = {https://doi.org/10.1145/3460120.3484550},
doi = {10.1145/3460120.3484550},
booktitle = {Proceedings of the 2021 ACM SIGSAC Conference on Computer and Communications Security},
pages = {1529–1548},
numpages = {20},
location = {Virtual Event, Republic of Korea},
series = {CCS '21}
}

@inproceedings{Ziegler2019Neural,
    title = "Neural Linguistic Steganography",
    author = "Ziegler, Zachary  and
      Deng, Yuntian  and
      Rush, Alexander",
    editor = "Inui, Kentaro  and
      Jiang, Jing  and
      Ng, Vincent  and
      Wan, Xiaojun",
    booktitle = "Proceedings of the 2019 Conference on Empirical Methods in Natural Language Processing and the 9th International Joint Conference on Natural Language Processing (EMNLP-IJCNLP)",
    month = nov,
    year = "2019",
    address = "Hong Kong, China",
    publisher = "Association for Computational Linguistics",
    url = "https://aclanthology.org/D19-1115/",
    doi = "10.18653/v1/D19-1115",
    pages = "1210--1215"}

@inproceedings{bai_provably_2025,
	address = {San Francisco, CA, USA},
	title = {Provably {Robust} and {Secure} {Steganography} in {Asymmetric} {Resource} {Scenario}},
	copyright = {https://doi.org/10.15223/policy-029},
	isbn = {979-8-3315-2236-0},
	url = {https://ieeexplore.ieee.org/document/11023508/},
	doi = {10.1109/SP61157.2025.00155},
	language = {en},
	urldate = {2025-09-24},
	booktitle = {2025 {IEEE} {Symposium} on {Security} and {Privacy} ({SP})},
	publisher = {IEEE},
	author = {Bai, Minhao and Yang, Jinshuai and Pang, Kaiyi and Xu, Xin and Yang, Zhen and Huang, Yongfeng},
	month = may,
	year = {2025},
	pages = {1438--1456},
}

@misc{ivypanda2024,
  author       = {{IvyPanda}},
  title        = {{IvyPanda}},
  howpublished = {\url{https://ivypanda.com/essays/}},
  year         = {2024},
  note         = {Accessed: January 19, 2024}
}

@article{rael_exploring_nodate,
	title = {Exploring the {Limits} of {Transfer} {Learning} with a {Uniﬁed} {Text}-to-{Text} {Transformer}},
	language = {en},
	author = {Raﬀel, Colin and Shazeer, Noam and Roberts, Adam and Lee, Katherine and Narang, Sharan and Matena, Michael and Zhou, Yanqi and Li, Wei and Liu, Peter J},
}

@misc{merity_pointer_2016,
	title = {Pointer {Sentinel} {Mixture} {Models}},
	url = {http://arxiv.org/abs/1609.07843},
	doi = {10.48550/arXiv.1609.07843},
	language = {en},
	urldate = {2026-01-05},
	publisher = {arXiv},
	author = {Merity, Stephen and Xiong, Caiming and Bradbury, James and Socher, Richard},
	month = sep,
	year = {2016},
	note = {arXiv:1609.07843 [cs]},
}

@InProceedings{kirchenbauer2023KGW,
  title = 	 {A Watermark for Large Language Models},
  author =       {Kirchenbauer, John and Geiping, Jonas and Wen, Yuxin and Katz, Jonathan and Miers, Ian and Goldstein, Tom},
  booktitle = 	 {Proceedings of the 40th International Conference on Machine Learning},
  pages = 	 {17061--17084},
  year = 	 {2023},
  editor = 	 {Krause, Andreas and Brunskill, Emma and Cho, Kyunghyun and Engelhardt, Barbara and Sabato, Sivan and Scarlett, Jonathan},
  volume = 	 {202},
  series = 	 {Proceedings of Machine Learning Research},
  month = 	 {23--29 Jul},
  publisher =    {PMLR},
  url = 	 {https://proceedings.mlr.press/v202/kirchenbauer23a.html}
}

@inproceedings{ICLR2024_c5b00c5b,
 author = {Hu, Zhengmian and Chen, Lichang and Wu, Xidong and Wu, Yihan and Zhang, Hongyang and Huang, Heng},
 booktitle = {International Conference on Learning Representations},
 editor = {B. Kim and Y. Yue and S. Chaudhuri and K. Fragkiadaki and M. Khan and Y. Sun},
 pages = {45408--45436},
 title = {Unbiased Watermark for Large Language Models},
 url = {https://proceedings.iclr.cc/paper_files/paper/2024/file/c5b00c5bdcc6fe35907dbcca03d27652-Paper-Conference.pdf},
 volume = {2024},
 year = {2024}
}

@article{kuditipudi2023its,
title={Robust Distortion-free Watermarks for Language Models},
author={Rohith Kuditipudi and John Thickstun and Tatsunori Hashimoto and Percy Liang},
journal={Transactions on Machine Learning Research},
issn={2835-8856},
year={2024},
url={https://openreview.net/forum?id=FpaCL1MO2C},
note={}
}

@INPROCEEDINGS{fernandez2023cycleshift,
  author={Fernandez, Pierre and Chaffin, Antoine and Tit, Karim and Chappelier, Vivien and Furon, Teddy},
  booktitle={2023 IEEE International Workshop on Information Forensics and Security (WIFS)}, 
  title={Three Bricks to Consolidate Watermarks for Large Language Models}, 
  year={2023},
  volume={},
  number={},
  pages={1-6},
  doi={10.1109/WIFS58808.2023.10374576}}

@inproceedings{yoo2023MPAC,
    title = "Advancing Beyond Identification: Multi-bit Watermark for Large Language Models",
    author = "Yoo, KiYoon  and
      Ahn, Wonhyuk  and
      Kwak, Nojun",
    editor = "Duh, Kevin  and
      Gomez, Helena  and
      Bethard, Steven",
    booktitle = "Proceedings of the 2024 Conference of the North American Chapter of the Association for Computational Linguistics: Human Language Technologies (Volume 1: Long Papers)",
    month = jun,
    year = "2024",
    address = "Mexico City, Mexico",
    publisher = "Association for Computational Linguistics",
    url = "https://aclanthology.org/2024.naacl-long.224/",
    doi = "10.18653/v1/2024.naacl-long.224",
    pages = "4031--4055"
}

@InProceedings{feng2025bimark,
  title = 	 {{B}i{M}ark: Unbiased Multilayer Watermarking for Large Language Models},
  author =       {Feng, Xiaoyan and Zhang, He and Zhang, Yanjun and Zhang, Leo Yu and Pan, Shirui},
  booktitle = 	 {Proceedings of the 42nd International Conference on Machine Learning},
  pages = 	 {17049--17067},
  year = 	 {2025},
  editor = 	 {Singh, Aarti and Fazel, Maryam and Hsu, Daniel and Lacoste-Julien, Simon and Berkenkamp, Felix and Maharaj, Tegan and Wagstaff, Kiri and Zhu, Jerry},
  volume = 	 {267},
  series = 	 {Proceedings of Machine Learning Research},
  month = 	 {13--19 Jul},
  publisher =    {PMLR},
  url = 	 {https://proceedings.mlr.press/v267/feng25u.html}
}

@InProceedings{jiang2025stealthink,
  title = 	 {{S}tealth{I}nk: A Multi-bit and Stealthy Watermark for Large Language Models},
  author =       {Jiang, Ya and Wu, Chuxiong and Boroujeny, Massieh Kordi and Mark, Brian and Zeng, Kai},
  booktitle = 	 {Proceedings of the 42nd International Conference on Machine Learning},
  pages = 	 {27685--27709},
  year = 	 {2025},
  editor = 	 {Singh, Aarti and Fazel, Maryam and Hsu, Daniel and Lacoste-Julien, Simon and Berkenkamp, Felix and Maharaj, Tegan and Wagstaff, Kiri and Zhu, Jerry},
  volume = 	 {267},
  series = 	 {Proceedings of Machine Learning Research},
  month = 	 {13--19 Jul},
  publisher =    {PMLR},
  url = 	 {https://proceedings.mlr.press/v267/jiang25j.html}
}

@misc{aaronson2023openai,
  title={`{R}eform' {AI} {A}lignment with {S}cott {A}aronson},
  author={Scott Aaronson},
  howpublished={AXRP - the AI X-risk Research Podcast},
  year={2023},
  url={https://axrp.net/episode/2023/04/11/episode-20-reform-ai-alignment-scott-aaronson.html}
}

@inproceedings{qu2024via_code,
author = {Qu, Wenjie and Zheng, Wengrui and Tao, Tianyang and Yin, Dong and Jiang, Yanze and Tian, Zhihua and Zou, Wei and Jia, Jinyuan and Zhang, Jiaheng},
title = {Provably robust multi-bit watermarking for AI-generated text},
year = {2025},
isbn = {978-1-939133-52-6},
publisher = {USENIX Association},
address = {USA},
booktitle = {Proceedings of the 34th USENIX Conference on Security Symposium},
articleno = {11},
numpages = {20},
location = {Seattle, WA, USA},
series = {SEC '25}
}

@InProceedings{wu2023dipmark,
  title = 	 {A Resilient and Accessible Distribution-Preserving Watermark for Large Language Models},
  author =       {Wu, Yihan and Hu, Zhengmian and Guo, Junfeng and Zhang, Hongyang and Huang, Heng},
  booktitle = 	 {Proceedings of the 41st International Conference on Machine Learning},
  pages = 	 {53443--53470},
  year = 	 {2024},
  editor = 	 {Salakhutdinov, Ruslan and Kolter, Zico and Heller, Katherine and Weller, Adrian and Oliver, Nuria and Scarlett, Jonathan and Berkenkamp, Felix},
  volume = 	 {235},
  series = 	 {Proceedings of Machine Learning Research},
  month = 	 {21--27 Jul},
  publisher =    {PMLR},
  url = 	 {https://proceedings.mlr.press/v235/wu24h.html}
}

@misc{achiam2023gpt4,
      title={GPT-4 Technical Report}, 
      author={OpenAI and Josh Achiam and Steven Adler and others},
      year={2024},
      eprint={2303.08774},
      archivePrefix={arXiv},
      primaryClass={cs.CL},
      url={https://arxiv.org/abs/2303.08774}, 
}

@article{biden2023executive,
  title={Executive order on the safe, secure, and trustworthy development and use of artificial intelligence},
  author={Biden, Joseph R},
  year={2023}
}

@article{smuha2025regulation,
  title={Regulation 2024/1689 of the Eur. Parl. \& Council of June 13, 2024 (EU Artificial Intelligence Act)},
  author={Smuha, Nathalie A},
  journal={International Legal Materials},
  pages={1--148},
  year={2025},
  publisher={Cambridge University Press}
}

@inproceedings{mao2024STA,
    title = "Watermarking Large Language Models: An Unbiased and Low-risk Method",
    author = "Mao, Minjia  and
      Wei, Dongjun  and
      Chen, Zeyu  and
      Fang, Xiao  and
      Chau, Michael",
    editor = "Che, Wanxiang  and
      Nabende, Joyce  and
      Shutova, Ekaterina  and
      Pilehvar, Mohammad Taher",
    booktitle = "Proceedings of the 63rd Annual Meeting of the Association for Computational Linguistics (Volume 1: Long Papers)",
    month = jul,
    year = "2025",
    address = "Vienna, Austria",
    publisher = "Association for Computational Linguistics",
    url = "https://aclanthology.org/2025.acl-long.391/",
    doi = "10.18653/v1/2025.acl-long.391",
    pages = "7939--7960",
    ISBN = "979-8-89176-251-0"
}

@inproceedings{mcmark2025Chen,
    title = "Improved Unbiased Watermark for Large Language Models",
    author = "Chen, Ruibo  and
      Wu, Yihan  and
      Guo, Junfeng  and
      Huang, Heng",
    editor = "Che, Wanxiang  and
      Nabende, Joyce  and
      Shutova, Ekaterina  and
      Pilehvar, Mohammad Taher",
    booktitle = "Proceedings of the 63rd Annual Meeting of the Association for Computational Linguistics (Volume 1: Long Papers)",
    month = jul,
    year = "2025",
    address = "Vienna, Austria",
    publisher = "Association for Computational Linguistics",
    url = "https://aclanthology.org/2025.acl-long.1005/",
    doi = "10.18653/v1/2025.acl-long.1005",
    pages = "20587--20601",
    ISBN = "979-8-89176-251-0"
}

@article{krishna2023paraphrasing,
  title={Paraphrasing evades detectors of ai-generated text, but retrieval is an effective defense},
  author={Krishna, Kalpesh and Song, Yixiao and Karpinska, Marzena and Wieting, John and Iyyer, Mohit},
  journal={Advances in Neural Information Processing Systems},
  volume={36},
  pages={27469--27500},
  year={2023}
}

@misc{touvron2023llama,
      title={Llama 2: Open Foundation and Fine-Tuned Chat Models}, 
      author={Hugo Touvron and Louis Martin and Kevin Stone and others},
      year={2023},
      eprint={2307.09288},
      archivePrefix={arXiv},
      primaryClass={cs.CL},
      url={https://arxiv.org/abs/2307.09288}, 
}

@article{dathathri2024synthid,
  title={Scalable watermarking for identifying large language model outputs},
  author={Dathathri, Sumanth and See, Abigail and Ghaisas, Sumedh and Huang, Po-Sen and McAdam, Rob and Welbl, Johannes and Bachani, Vandana and Kaskasoli, Alex and Stanforth, Robert and Matejovicova, Tatiana and others},
  journal={Nature},
  volume={634},
  number={8035},
  pages={818--823},
  year={2024},
  publisher={Nature Publishing Group UK London}
}

@misc{grattafiori_llama_2024,
	title = {The {Llama} 3 {Herd} of {Models}},
	url = {http://arxiv.org/abs/2407.21783},
	doi = {10.48550/arXiv.2407.21783},
	language = {en},
	urldate = {2026-01-12},
	publisher = {arXiv},
	author = {Grattafiori, Aaron and Dubey, Abhimanyu and Jauhri, Abhinav and others},
	month = nov,
	year = {2024},
	note = {arXiv:2407.21783 [cs]},
}

@inproceedings{pythia_icml_2023,
  title     = {Pythia: A Suite for Analyzing Large Language Models Across Training and Scaling},
  author    = {Biderman, Stella and Schoelkopf, Hailey and others},
  booktitle = {Proceedings of the 40th International Conference on Machine Learning (ICML)},
  year      = {2023}
}

@article{bloom_2022,
  title   = {BLOOM: A 176B-Parameter Open-Access Multilingual Language Model},
  author  = {BigScience Workshop},
  journal = {arXiv preprint arXiv:2211.05100},
  year    = {2022}
}

@article{gemma2_2024,
  title   = {Gemma 2: Improving Open Language Models at a Practical Size},
  author  = {Gemma Team},
  journal = {arXiv preprint arXiv:2408.00118},
  year    = {2024}
}
\appendix
\section{Ethical Considerations}
\label{app:ethics}

\noindent\textbf{Scope.}
We study robust multi-bit watermarking and asymmetric verification for LLM-generated text. The relevant stakeholders include end users, content creators, model and platform providers, auditors, and adversaries.

\noindent\textbf{Benefits.}
Provenance signals can support accountability and incident response while preserving text utility by avoiding distortion-inducing bias.

\noindent\textbf{Risks.}
Watermarking may be misused for surveillance or censorship. Detection is probabilistic and may be misinterpreted as conclusive evidence. Publicly disclosing robust designs may also stimulate stronger removal or forgery attacks. Performance may additionally vary across domains and languages.

\noindent\textbf{Mitigation and responsible use.}
We recommend using payloads that do not encode personally identifiable information by default, together with strict key management. In high-risk settings, evidence should be used conservatively, with calibrated thresholds and explicit error-rate reporting. Assumptions, threat models, and measured performance should be reported transparently. We caution against automated enforcement or punitive decisions based solely on watermark detection.
\section{Computational Unbiasedness of PURA for a Single Invocation}
\label{app:onecall_unbiased}

\subsection{Setup}
\label{app:setup_its}

We follow the notation from Section~\ref{subsec:its_background}. For an autoregressive language model over vocabulary $\mathcal{V}$, the baseline generator $\mathsf{Gen}_{\mathrm{Baseline}}$ samples $x_t \sim \pi_t(\cdot)$ at each step $t$. Fix the secret permuted order induced by $\Pi$ as
\[
\tilde{\mathcal{V}}=(\tilde{v}_1,\ldots,\tilde{v}_{|\mathcal{V}|}),
\]
and define the corresponding CDF
\begin{equation}
\tilde{F}_t(k) = \sum_{j=1}^{k}\pi_t(\tilde{v}_j),
\qquad \tilde{F}_t(0)=0.
\label{eq:app_Ft}
\end{equation}
Sampling $u_t\sim \mathrm{Unif}[0,1)$ and setting
\begin{equation}
\tilde{k}_t = \min\{k \mid \tilde{F}_t(k) > u_t\},
\qquad x_t = \tilde{v}_{\tilde{k}_t}
\label{eq:app_baseline_its}
\end{equation}
produces $\pi_t(\cdot)$ exactly, regardless of the order.

\subsection{Latent-Space Construction of PURA}
\label{app:setup_pura}

Let a keyed hash output be $h\in\{0,1\}^\lambda$, and define $\mathrm{Unif}(h)=\mathrm{Int}(h)/2^\lambda\in[0,1)$. We use the standard idealization that this conversion is uniform on $[0,1)$, although finite precision introduces only negligible discretization error. The keyed derivations used for seed generation, payload indexing, context gating, and the vocabulary permutation are separated by public tag prefixes and analyzed through the corresponding random-function hybrid under a pseudorandom-function (PRF) assumption.

Let $\mathsf{ctx}_t=x_{t-W:t-1}$. At a watermarked step $t$, PURA computes the base phase $U_t$ and the payload offset $\Delta_t$ according to Equations~\eqref{eq:base_seed}--\eqref{eq:anchor}, and sets
\begin{equation}
u'_t \triangleq U_t \oplus \Delta_t,
\label{eq:app_pura_vars}
\end{equation}
where $\Delta_t=(s_t+0.5)/K_t$, $K_t=2^{a_t}$, and $s_t\in\{0,\ldots,K_t-1\}$. In the random-function hybrid, domain separation makes $\Delta_t$ independent of the seed hash output used to form $U_t$. The uniqueness constraint in Equation~\eqref{eq:uniqueness} ensures that all context inputs $\mathsf{ctx}_t$ at watermarked steps are distinct within one session, up to negligible hash collisions. Skip steps sample directly from $\pi_t(\cdot)$, while watermarked steps apply ITS with $u'_t$ in the permuted space:
\[
\tilde{k}_t = \min\{k \mid \tilde{F}_t(k) > u'_t\},
\qquad x_t=\tilde{v}_{\tilde{k}_t}.
\]

\subsection{Independence Lemmas}
\label{app:independence_analysis}

\begin{lemma}[Rotational invariance]
\label{lem:app_rotation_invariance}
If $u\sim \mathrm{Unif}[0,1)$ and $\Delta\in[0,1)$ is independent of $u$, then $u\oplus\Delta\sim\mathrm{Unif}[0,1)$.
\end{lemma}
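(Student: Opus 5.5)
The plan is to condition on $\Delta$ and compute the CDF of $u\oplus\Delta$ directly. Since $\Delta$ is independent of $u$, it suffices to show that for each fixed $\delta\in[0,1)$ the variable $u\oplus\delta$ is uniform. Once that holds, integrating over the law of $\Delta$ yields the same uniform law.

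\textbf{Step 1 (fixed offset).} Fix $\delta\in[0,1)$ and $y\in[0,1)$, and compute $\Pr[u\oplus\delta\le y]$ by cases on whether $u+\delta<1$.
\begin{itemize}
\item On $\{u<1-\delta\}$ we have $u\oplus\delta=u+\delta$. The event $u+\delta\le y$ is then $u\in[0,\,y-\delta]$, which is nonempty only if $y\ge\delta$.
\item On $\{u\ge 1-\delta\}$ we have $u\oplus\delta=u+\delta-1$. The event $u+\delta-1\le y$ is then $u\in[1-\delta,\,\min(1,\,1-\delta+y)]$.
\end{itemize}
If $y\ge\delta$, the two pieces have Lebesgue measures $y-\delta$ and $\delta$, summing to $y$. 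If $y<\delta$, the first piece is empty and the second has measure $y$. In both cases $\Pr[u\oplus\delta\le y]=y$, so $u\oplus\delta\sim\mathrm{Unif}[0,1)$.

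An equivalent route avoids the case split: the map $u\mapsto u\oplus\delta$ is a measure-preserving bijection of the circle $\mathbb{R}/\mathbb{Z}$, since it is a translation. I would mention this only as a remark.

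\textbf{Step 2 (random offset).} Because $\Delta$ is independent of $u$, for any Borel set $B\subseteq[0,1)$,
\[
\Pr[u\oplus\Delta\in B]=\mathbb{E}\bigl[\Pr[u\oplus\Delta\in B\mid\Delta]\bigr]=\mathbb{E}\bigl[\lambda(B)\bigr]=\lambda(B),
\]
where $\lambda$ denotes Lebesgue measure. The middle equality uses Step 1 together with the substitution (freezing) property of conditional expectation under independence. This property needs joint measurability of $(u,\delta)\mapsto u\oplus\delta$, which holds because the map is piecewise continuous. In the paper's setting $\Delta_t$ takes only finitely many values $(s+0.5)/K_t$, so one could instead sum over these values, which avoids measure-theoretic care.

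The argument has no real obstacle. The only point needing attention is the boundary bookkeeping in Step 1 (the cases $y<\delta$ versus $y\ge\delta$). It is also worth stating explicitly that independence is essential, since an offset $\Delta$ correlated with $u$ could break uniformity.
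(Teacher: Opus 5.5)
Your proof is correct. The paper's proof is one line: $u\mapsto u\oplus\Delta$ is a measure-preserving bijection of $\mathbb{R}/\mathbb{Z}$. That is a direct appeal to translation invariance of Lebesgue measure on the circle, and it quietly treats $\Delta$ as a fixed constant.

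Your route differs in two ways. First, you verify that invariance by hand: the two-case CDF computation on the intervals $[0,y]$ is exactly what ``measure-preserving'' means for a fixed $\delta$. Second, you supply the step the paper leaves implicit, namely passing from a fixed offset to an independent random one by conditioning on $\Delta$ (the freezing property, or Fubini). That step is where the independence hypothesis is actually used.

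The paper's version is shorter and extends verbatim to any compact group with its Haar measure. Yours is self-contained and shows exactly where independence matters.

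In the paper's only use of the lemma (Step 2 of the unbiasedness proof), the offset $\Delta_t$ is already fixed by conditioning on the history and the non-seed outputs. So only your Step 1 is really needed there. Your remark that $\Delta_t$ takes finitely many values would make Step 2 elementary in any case.
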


\begin{proof}
The map $u\mapsto u\oplus\Delta$ is a measure-preserving bijection on $\mathbb{R}/\mathbb{Z}$.
\end{proof}

\begin{lemma}[Independent uniformity under distinct queries]
\label{lem:app_dedup_independence}
Let $f:\mathcal{C}\to[0,1)$ be a truly random function. If an adaptive process queries distinct contexts $\mathsf{ctx}_t$ during one execution, then the returned values are independent and uniformly distributed on $[0,1)$.
\end{lemma}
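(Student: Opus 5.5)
The plan is to prove Lemma~\ref{lem:app_dedup_independence} by induction on the number of queries. We will use the standard ``lazy sampling'' view of a truly random function. Since $f$ is truly random, we may equivalently realize it by assigning $f(\mathsf{ctx})$ a fresh, independent $\mathrm{Unif}[0,1)$ value the first time $\mathsf{ctx}$ is queried. Any later query to the same $\mathsf{ctx}$ returns the stored value. This lazily sampled oracle has exactly the same joint distribution of answers as a function drawn uniformly up front, because the values of $f$ at distinct points are i.i.d.\ uniform by definition. Hence any adaptive process interacting with either oracle produces the same transcript distribution.

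Let $\mathsf{ctx}_1,\dots,\mathsf{ctx}_n$ be the queries issued in order, with $y_k=f(\mathsf{ctx}_k)$. Let $\mathcal{F}_{k-1}$ be the $\sigma$-algebra generated by everything the process has seen before the $k$-th query. This includes the earlier answers $y_1,\dots,y_{k-1}$ and any internal randomness (for instance the language model's sampling on skip steps). Adaptivity means $\mathsf{ctx}_k$ is $\mathcal{F}_{k-1}$-measurable. The key step is to show that, conditional on $\mathcal{F}_{k-1}$, $y_k\sim\mathrm{Unif}[0,1)$. By hypothesis $\mathsf{ctx}_k\notin\{\mathsf{ctx}_1,\dots,\mathsf{ctx}_{k-1}\}$, so under lazy sampling $y_k$ is a fresh draw. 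It is made independently of all previously sampled values and of the process's other randomness. Formally, we condition on each realization of the history, which fixes $\mathsf{ctx}_k$ to some new point $c$. We then note that $f(c)$ is independent of $\{f(c'):c'\ne c\}$ and of the external coins, so its conditional law is uniform.

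Given that conditional statement, independence follows by the chain rule. For measurable $B_1,\dots,B_n\subseteq[0,1)$, iterated conditioning gives
\[
\Pr[y_1\in B_1,\dots,y_n\in B_n]=\prod_{k=1}^{n}\mathrm{Leb}(B_k),
\]
which establishes that the $y_k$ are i.i.d.\ uniform.

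The main obstacle is the subtlety that the query points themselves are random and depend on earlier answers. A naive argument that ``distinct inputs of a random function give independent outputs'' applies only to fixed inputs. The conditioning argument above handles this: at each step the next query is determined by the past, yet it is guaranteed new. We will be explicit that the distinctness hypothesis must hold almost surely, which is exactly what the skip rule in Equation~\eqref{eq:uniqueness} enforces; hash collisions are absorbed as a negligible event elsewhere.

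Finally, the idealization that $\mathrm{Unif}(h)$ is exactly uniform on $[0,1)$ is inherited from Section~\ref{app:setup_pura}. If one instead works with the $2^\lambda$-point grid, the same argument yields i.i.d.\ uniform values on the grid.
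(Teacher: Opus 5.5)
Your proposal is correct and takes essentially the same approach as the paper, whose proof is the one-line observation that a truly random function returns independent uniform values on distinct inputs regardless of whether the queries are adaptive. Your lazy-sampling view, the conditional uniformity of each fresh answer given the history, and the chain-rule product formula make that adaptivity clause rigorous; the only point left implicit is that the number of queries may itself be random (a stopping time bounded by $L$), which your argument handles by padding with fresh independent uniforms after the process stops.
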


\begin{proof}
A truly random function returns independent uniform values on distinct inputs, regardless of whether the queries are adaptive.
\end{proof}

\begin{corollary}[Fresh base phase]
\label{cor:fresh_phase_uniform}
In the random-function hybrid and under the uniqueness constraint of
PURA, for every watermarked step $t$,
\[
U_t \mid \mathcal{G}_{t-1} \sim \mathrm{Unif}[0,1).
\]
\end{corollary}

\begin{proof}
The uniqueness constraint guarantees that $\mathsf{ctx}_t$ is a fresh input at every step, so the conclusion follows immediately from Lemma~\ref{lem:app_dedup_independence}.
\end{proof}

\subsection{Proof of Computational Unbiasedness}
\label{app:comp_unbiased_proof}

We model the domain-separated keyed derivations as secure pseudorandom functions and analyze them through the corresponding random-function hybrid.

\begin{theorem}[Computational unbiasedness for a single invocation]
\label{thm:app_onecall_comp}
Under the repeated-context skipping rule in Equation~\eqref{eq:uniqueness} and the PRF assumption, for any PPT distinguisher $\mathcal{D}$,
\[
\Bigl|\Pr[\mathcal{D}(\mathsf{Gen}_{\mathrm{PURA}})=1]
-\Pr[\mathcal{D}(\mathsf{Gen}_{\mathrm{Baseline}})=1]\Bigr|
\le \mathrm{negl}(\lambda).
\]
\end{theorem}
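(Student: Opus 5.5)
The proof is a two-hybrid argument. $\mathsf{H}_0$ is the real $\mathsf{Gen}_{\mathrm{PURA}}$ with a uniformly sampled key. In $\mathsf{H}_1$, every domain-separated keyed derivation ($\mathcal{H}_{\mathrm{seed}}$, $\mathcal{H}_{\mathrm{idx}}$, the gating derivation, and the derivation of $\Pi$) is replaced by an independent truly random function on its domain. $\mathsf{H}_2$ is $\mathsf{Gen}_{\mathrm{Baseline}}$.

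\textbf{Step 1: $\mathsf{H}_0 \approx \mathsf{H}_1$.} We reduce to PRF security. A distinguisher $\mathcal{D}$ separating $\mathsf{H}_0$ from $\mathsf{H}_1$ yields an oracle adversary $\mathcal{B}$. This adversary runs the generator on the fixed prompt and payload, answers every keyed call with its oracle under the public domain tags, and forwards the transcript to $\mathcal{D}$. The model is polynomial-time and the generation length is polynomial, so $\mathcal{B}$ is PPT. Domain separation ensures that the several derivations are answered by a single PRF instance queried on tag-prefixed inputs. The advantage of $\mathcal{D}$ is therefore at most $\mathrm{Adv}^{\mathrm{prf}}(\mathcal{B})=\mathrm{negl}(\lambda)$. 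The finite-precision map $\mathrm{Unif}$ contributes at most $2^{-\lambda}$ statistical distance per step, which stays negligible over polynomially many steps.

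\textbf{Step 2: $\mathsf{H}_1 \equiv \mathsf{H}_2$ as distributions over transcripts.} We show by induction on $t$ that the law of $x_t$ given $\mathcal{G}_{t-1}$ equals $\pi_t(\cdot\mid x_{<t})$ in both processes.
\begin{itemize}
\item Skip steps sample from $\pi_t$ by definition.
\item At a watermarked step, the uniqueness rule in Equation~\eqref{eq:uniqueness} guarantees that the seed query $\mathsf{ctx}_t$ is fresh. By Corollary~\ref{cor:fresh_phase_uniform}, $U_t\mid\mathcal{G}_{t-1}\sim\mathrm{Unif}[0,1)$.
\item The offset $\Delta_t$ is determined by $\mathcal{G}_{t-1}$, by the fixed payload, and by the index and gating functions, which are independent of the seed function. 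Hence $U_t$ is independent of $\Delta_t$ conditionally on everything the step uses.
\item Lemma~\ref{lem:app_rotation_invariance} then makes $U_t\oplus\Delta_t$ uniform.
\item ITS in the permuted order (Equation~\eqref{eq:app_baseline_its}) reproduces $\pi_t$ for any fixed order. This also holds conditionally on $\Pi$, which the baseline never uses.
\end{itemize}
Chaining these conditionals gives identical joint laws for the whole sequence, so $\mathsf{H}_1$ and $\mathsf{H}_2$ are perfectly indistinguishable. The triangle inequality then yields the claimed bound.

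\textbf{Main obstacle.} The delicate point is the conditioning in Step 2. Whether step $t$ is watermarked, and the value of $\mathsf{Hist}$, themselves depend on earlier outputs of the seed function. So we must verify that conditioning on $\mathcal{G}_{t-1}$, which includes those earlier outputs and the event that $\mathsf{ctx}_t\notin\mathsf{Hist}$, does not bias $U_t$.

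The plan is to argue that freshness is decided by the hash values on earlier contexts and the membership test. I would formalise this through the lazy-sampling view of the random function. When the membership test is done on the raw context rather than on its hash, $f(\mathsf{ctx}_t)$ is a new uniform draw independent of the entire past.

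When the test compares hash values, a fresh context can collide with a stored hash and be skipped. This causes a conditioning bias of order $t\cdot 2^{-\lambda}$, and I would bound it by a union bound over collisions, giving a total of at most $L^2 2^{-\lambda}$. An analogous bad event covers a step whose context was already queried through another derivation, but domain separation rules this out.
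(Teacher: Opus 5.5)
Your proposal is correct and follows the same route as the paper's proof: a PRF-to-random-function hybrid, then a step-by-step argument that $\mathsf{H}_1$ matches the baseline, chained over $t$. Skip steps match trivially, and watermarked steps match via freshness of the seed query, independence of $\Delta_t$ from $U_t$ by domain separation, rotation invariance, and order-independence of ITS.

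You are more careful than the paper, which treats $\mathrm{Unif}$ as exactly uniform and mentions hash collisions only in passing while claiming exact equality in Step~2. Two small corrections: with your collision term, Step~2 becomes a statistical-closeness bound rather than perfect equality. Also, the per-step discretization error is of order $|\mathcal{V}|\,2^{-\lambda}$ rather than $2^{-\lambda}$, which is still negligible.
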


\begin{proof}
We construct two hybrid distributions.

\noindent
\textbf{H0} is the real PURA process. Watermarked steps use ITS with $u'_t=U_t\oplus\Delta_t$, and skip steps sample from $\pi_t(\cdot)$ exactly as in Section~\ref{sec:method}.

\noindent
\textbf{H1} replaces the domain-separated keyed derivations with independent random-function outputs on their prefixed inputs.

\noindent\textbf{Step 1: H0 $\approx$ H1.}
By PRF security, any algorithm that distinguishes H0 from H1 yields a PRF distinguisher, so
\[
|\Pr[\mathcal{D}(\mathrm{H0})=1]
-\Pr[\mathcal{D}(\mathrm{H1})=1]|
\le \mathrm{negl}(\lambda).
\]

\noindent\textbf{Step 2: H1 $\equiv$ baseline.}
Fix a step $t$ and the history $x_{<t}$, and condition further on the permutation and the non-seed outputs used at this step. On skip steps, H1 is identical to the baseline by construction. On watermarked steps, Lemma~\ref{lem:app_dedup_independence} implies
\[
U_t\mid x_{<t}\sim\mathrm{Unif}[0,1).
\]
The offset $\Delta_t$ is then fixed, while $U_t$ remains independent of it by domain separation. Lemma~\ref{lem:app_rotation_invariance} gives
\[
u'_t=U_t\oplus\Delta_t\mid x_{<t}\sim\mathrm{Unif}[0,1).
\]
Feeding a uniform latent variable into Equation~\eqref{eq:app_baseline_its} yields $\pi_t(\cdot\mid x_{<t})$ exactly, so
\[
\Pr_{\mathrm{H1}}[x_t\in\cdot\mid x_{<t}]
=
\Pr_{\mathrm{Baseline}}[x_t\in\cdot\mid x_{<t}].
\]
By the chain rule,
\[
\Pr_{\mathrm{H1}}[x_{1:L}]
=
\Pr_{\mathrm{Baseline}}[x_{1:L}].
\]

Combining the two steps proves the theorem.
\end{proof}

\section{Proofs for Robustness}
\label{app:robustness_proofs}

\allowdisplaybreaks

\noindent\textbf{Scope.}
For the implementation evaluated in this paper, non-skipped arities satisfy $a\in\{1,2,3\}$. The symmetry arguments below are therefore verified for these three explicit Gray codebooks. The same proof pattern extends to any fixed finite arity by checking the corresponding sector permutation.

\subsection{Sector-Class Symmetry of Gray Codes}
\label{app:gray_symmetry}

For arity $a\in\{1,2,3\}$, let
\[
G_a:\{0,\dots,2^a-1\}\to\{0,1\}^a
\]
denote the explicit Gray codebook used by PURA. For each local Gray-bit coordinate $j\in\{0,\dots,a-1\}$ and each $b\in\{0,1\}$, define
\[
\mathcal{S}^{(j)}_b
\triangleq
\{\,s\in\{0,\dots,2^a-1\} : [G_a(s)]_j=b\,\}.
\]

\begin{lemma}[Sector-class symmetry of the PURA Gray codebooks]
\label{lem:gray_symmetry}
For every $a\in\{1,2,3\}$ and every coordinate $j\in\{0,\dots,a-1\}$, there exists a cyclic shift
\[
R_{a,j}(s)=s+\Delta_{a,j}\pmod{2^a}
\]
such that
\[
R_{a,j}(\mathcal{S}^{(j)}_0)=\mathcal{S}^{(j)}_1,
\qquad
R_{a,j}(\mathcal{S}^{(j)}_1)=\mathcal{S}^{(j)}_0.
\]
\end{lemma}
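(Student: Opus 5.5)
The plan is to prove the lemma uniformly in $a$ from the closed form of the reflected binary Gray code, and then to confirm the three arities $a\in\{1,2,3\}$ by direct inspection of the codebooks. I assume PURA's codebook is the standard reflected code $G_a(s)=s\oplus\lfloor s/2\rfloor$, with bitwise XOR and $a$-bit words. Write $k\in\{0,\dots,a-1\}$ for the binary significance of the coordinate $j$, so that $k=0$ is the least significant bit; the map $j\mapsto k$ is just the indexing convention of the codebook. The claim then reduces to the following: for each $k$, the indicator $s\mapsto[G_a(s)]_k$ is a periodic square wave on $\mathbb{Z}/2^a\mathbb{Z}$, and translating it by half its period complements it.

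\emph{Step 1 (closed form of each Gray bit).} From $[G_a(s)]_k=[s]_k\oplus[s]_{k+1}$ (with $[s]_a=0$) I will derive two cases.
\begin{itemize}
\item For the most significant bit, $k=a-1$, we have $[G_a(s)]_{a-1}=\lfloor s/2^{a-1}\rfloor \bmod 2$.
\item For $k\le a-2$, we have $[G_a(s)]_k=\lfloor (s+2^k)/2^{k+1}\rfloor \bmod 2$.
\end{itemize}
The second identity holds because adding $2^k$ sends $[s]_k$ into a carry into bit $k+1$ exactly when $[s]_k=1$. Checking the four residues of $s$ modulo $2^{k+2}$ confirms it. For example, with $k=0$ the values at $s=0,1,2,3$ are $0,1,1,0$, which matches $G_2$.

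\emph{Step 2 (choice of shift).} Set $\Delta_{a,j}=2^{a-1}$ if $k=a-1$, and $\Delta_{a,j}=2^{k+1}$ if $k\le a-2$. In each case this is half of the period $P$ of the corresponding square wave, with $P=2^a$ or $P=2^{k+2}$ respectively. Adding $P/2$ inside the floor adds exactly $1$ to the quotient, which flips the parity, so
\[
[G_a(s+\Delta_{a,j})]_k=1-[G_a(s)]_k \quad\text{for integer } s.
\]

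\emph{Step 3 (passing to residues).} Since $P$ divides $2^a$ in both cases, the bit pattern is well defined on $\mathbb{Z}/2^a\mathbb{Z}$. Hence the identity from Step 2 survives the reduction of $s+\Delta_{a,j}$ modulo $2^a$. It follows that $R_{a,j}$ maps $\mathcal{S}^{(j)}_0$ into $\mathcal{S}^{(j)}_1$ and $\mathcal{S}^{(j)}_1$ into $\mathcal{S}^{(j)}_0$. Because $R_{a,j}$ is a bijection of $\{0,\dots,2^a-1\}$ and these two classes partition that set, both inclusions are equalities.

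\emph{Step 4 (explicit check).} As a sanity check that does not rely on the bit-index convention, I will list the three explicit codebooks and verify each case by hand.
\begin{itemize}
\item For $a=1$, the shift is $1$.
\item For $a=2$, the shifts are $2$ for both bits.
\item For $a=3$, the shifts are $4$ for the most significant bit and for the middle bit, and $2$ for the least significant bit. The least significant bit has pattern $0,1,1,0,0,1,1,0$, and shifting it by $2$ complements it.
\end{itemize}

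The main obstacle is Step 2. The naive guess $\Delta=2^k$ fails; for instance, the least significant bit of $G_2$ is not complemented by a shift of $1$. Getting the half-period right therefore requires the carry identity of Step 1, and it also requires checking that the period divides $2^a$ so that the shift is well defined modulo $2^a$.
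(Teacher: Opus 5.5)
Your proof is correct, but its main line takes a different route from the paper's. The paper proves the lemma purely by inspection. It writes out the three codebooks $(0,1)$, $(00,01,11,10)$, and $(000,001,011,010,110,111,101,100)$, and reads off the shifts $1$; $2,2$; and $4,4,2$. That is exactly your Step 4.

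Your Steps 1--3 instead prove the statement for every arity at once, for the reflected code $G_a(s)=s\oplus\lfloor s/2\rfloor$, which is indeed what the paper's tables are. You identify the Gray bit of significance $k$ as a square wave of period $2^{k+2}$ (or $2^a$ for the top bit) and take $\Delta$ to be half that period. The carry identity $[s+2^k]_{k+1}=[s]_{k+1}\oplus[s]_k$ and the fact that the period divides $2^a$ are exactly what is needed. The bijection argument in Step 3 then correctly upgrades the two inclusions to equalities.

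The paper's case check needs nothing beyond the three listed tables and no bit-index convention. Your argument gives an explicit shift formula and proves the paper's remark that the pattern ``extends to any fixed finite arity'' with no further case work. It also explains why the middle bit of $G_3$ needs a shift of $4$ rather than $2$.

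One small wording slip: in Step 1 the four cases to check are the four values of $\lfloor s/2^k\rfloor \bmod 4$, not ``four residues modulo $2^{k+2}$'' (there are $2^{k+2}$ of those). Your carry argument already covers this, so nothing breaks.
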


\begin{proof}
The claim is verified case by case over the explicit codebooks used in PURA.

When $a=1$, the codebook is $(0,1)$, so the only coordinate is swapped by a shift of $1$.

When $a=2$, the codebook is $(00,01,11,10)$. For both coordinates, a shift of $2$ swaps the corresponding $0$ and $1$ classes.

When $a=3$, the codebook is $(000,001,011,010,110,111,101,100)$. For the first two coordinates, a shift of $4$ swaps the two classes, and a shift of $2$ does so for the last coordinate.
\end{proof}

Therefore, for every valid observation with arity $a_t>0$ and every local coordinate $j$, there exists a circular rotation
\[
u \mapsto u \oplus \frac{\Delta_{a_t,j}}{2^{a_t}}
\]
that permutes the candidate anchors and swaps the two sector classes $\mathcal{S}^{(j)}_0$ and $\mathcal{S}^{(j)}_1$.

\subsection{Centering of Unsynchronized Steps Under TM0}
\label{app:tm0_centeredness}

For the TM0 analysis, we model the verifier-side phase at an
unsynchronized step as conditionally uniform given the preceding
verifier history, the synchronization outcome, and the resulting
observed interval:
\begin{equation}
\label{eq:tm0_phase_condition_app}
U_t^{\mathrm{ver}}
\mid
\mathcal{G}_{t-1},\mathsf{Val}_t=0,I_t(x'_t)
\sim \mathrm{Unif}[0,1).
\end{equation}

\begin{lemma}[Centering under TM0]
\label{lem:tm0_centeredness_app}
Under TM0, for every $t\in\mathcal{T}_i$,
\[
\mathbb{E}[\Lambda_t \mid \mathcal{G}_{t-1},\,\mathsf{Val}_t=0] = 0,
\]
and hence
\[
\mathbb{E}[Z_t \mid \mathcal{G}_{t-1},\,\mathsf{Val}_t=0] = 0.
\]
\end{lemma}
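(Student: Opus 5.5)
We condition on $\mathcal{G}_{t-1}$, $\mathsf{Val}_t=0$ and the observed interval $I=I_t(x'_t)$, and use the modelling assumption in Equation~\eqref{eq:tm0_phase_condition_app}: under this conditioning, $U_t^{\mathrm{ver}}$ is uniform on $[0,1)$. The plan is to show that $\Lambda_t$, viewed as a function of $U_t^{\mathrm{ver}}$ with $I$, $a_t$ and the local coordinate $j$ held fixed, is odd under a measure-preserving rotation of the phase. Its conditional expectation must then vanish. Integrating out $I$ by the tower property gives the statement for $\mathcal{G}_{t-1},\mathsf{Val}_t=0$. Since $Z_t=(2c_i-1)\Lambda_t/B_\epsilon$ and $c_i$ is fixed, centering of $Z_t$ follows at once.

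First we write the dependence explicitly. The candidate anchors are $A_{t,s}^{\mathrm{ver}}=U\oplus(s+0.5)/K_t$, so $w_t(s)=g(U\oplus(s+0.5)/K_t)$ with $g(A)=\int_I\mathcal{K}(u;A)\,du$. Take the shift $\Delta_{a_t,j}$ from Lemma~\ref{lem:gray_symmetry} and set $\delta=\Delta_{a_t,j}/K_t$. Replacing $U$ by $U\oplus\delta$ changes every sector weight as $w_t(s)\mapsto w_t(R_{a_t,j}(s))$. This uses only modular arithmetic on the anchors, $(U\oplus\delta)\oplus(s+0.5)/K_t=U\oplus(R(s)+0.5)/K_t$, and needs no symmetry of the kernel.

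Next we look at the posterior. The total $\sum_s w_t(s)$ is invariant under this relabelling. By Lemma~\ref{lem:gray_symmetry}, $R_{a_t,j}$ maps $\mathcal{S}^{(j)}_1$ bijectively onto $\mathcal{S}^{(j)}_0$. Hence $p_t^{(j)}(U\oplus\delta)=1-p_t^{(j)}(U)$. The clipping map $p\mapsto\mathrm{clip}(p,\epsilon,1-\epsilon)$ commutes with $p\mapsto 1-p$, so $\bar p$ transforms the same way. The logit is odd about $1/2$, which gives $\Lambda_t^{(j)}(U\oplus\delta)=-\Lambda_t^{(j)}(U)$.

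Finally, $U\mapsto U\oplus\delta$ preserves the uniform law on the circle by Lemma~\ref{lem:app_rotation_invariance}, and $\Lambda_t$ is bounded by $B_\epsilon$. Therefore
\[
\mathbb{E}[\Lambda_t\mid\cdot]=\mathbb{E}[\Lambda_t(U\oplus\delta)\mid\cdot]=-\mathbb{E}[\Lambda_t\mid\cdot],
\]
so the conditional expectation is zero.

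The main obstacle is a degenerate case: if all weights vanish, the posterior is undefined. We handle it by a fixed convention, $p=1/2$ and hence $\Lambda=0$. This convention is itself symmetric, so the argument goes through. A second subtle point is that $a_t$, the payload index and the coordinate $j$ must be determined by $\mathcal{G}_{t-1}$ and the context, not by $U_t^{\mathrm{ver}}$. Domain separation of $\mathcal{H}_{\mathrm{idx}}$ from $\mathcal{H}_{\mathrm{seed}}$, together with the gate, lets us treat them as fixed inside the conditioning.
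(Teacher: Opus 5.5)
Your proposal is correct and follows the paper's proof essentially step for step. You condition on the observed interval under the uniform-phase assumption, rotate $U_t^{\mathrm{ver}}$ by the Gray-symmetry shift $\Delta_{a_t,j}/K_t$ so that the sector weights are relabelled and the two bit classes swap, deduce $\Lambda_t\mapsto-\Lambda_t$ (clipping preserves the sign flip), and conclude from rotation invariance of the uniform law before averaging over the interval; your remarks on the all-zero-weight convention and on $a_t$, $\mathrm{idx}(t)$, $j$ being fixed under the conditioning are harmless refinements the paper leaves implicit.
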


\begin{proof}
Fix $t\in\mathcal{T}_i$ and condition on
$(\mathcal{G}_{t-1},\mathsf{Val}_t=0)$ and on the observed interval
$I_t(x'_t)$. By Equation~\eqref{eq:tm0_phase_condition_app},
\[
U_t^{\mathrm{ver}}
\mid
\mathcal{G}_{t-1},\mathsf{Val}_t=0,I_t(x'_t)
\sim \mathrm{Unif}[0,1).
\]

Let $a_t' \in \{1,2,3\}$ be the arity reconstructed by the verifier
and let $K_t'=2^{a_t'}$. Since $t\in\mathcal{T}_i$, there is a unique
local Gray-bit coordinate $j\in\{0,\dots,a_t'-1\}$ such that
\[
  (\mathrm{idx}(t)+j)\bmod |c| = i .
\]
This is the local coordinate whose LLR contributes to encoded payload
position $i$. By Lemma~\ref{lem:gray_symmetry}, there exists a cyclic
shift $s \mapsto s+\Delta \pmod{K_t'}$ that swaps
$\mathcal{S}^{(j)}_0$ and $\mathcal{S}^{(j)}_1$.

The verifier constructs sector weights
\[
w_t(s)=\int_{u\in I_t(x'_t)}\mathcal{K}(u;A_{t,s}^{\mathrm{ver}})\,du,
\qquad
A_{t,s}^{\mathrm{ver}}
=
U_t^{\mathrm{ver}}\oplus \frac{s+0.5}{K_t'}.
\]
Define $\bar{U}_t^{\mathrm{ver}}=U_t^{\mathrm{ver}}\oplus \Delta/K_t'$. Then
\[
\bar{A}_{t,s}^{\mathrm{ver}}
=
\bar{U}_t^{\mathrm{ver}} \oplus \frac{s+0.5}{K_t'}
=
A_{t,s+\Delta}^{\mathrm{ver}},
\]
so $\bar{w}_t(s)=w_t(s+\Delta)$.

The local LLR contributed by this step to payload position $i$ is
\[
\Lambda_t
=
\Lambda_t^{(j)}
=
\log\frac{\sum_{s\in\mathcal{S}^{(j)}_1}w_t(s)}
          {\sum_{s\in\mathcal{S}^{(j)}_0}w_t(s)}.
\]
Because the shift swaps $\mathcal{S}^{(j)}_0$ and
$\mathcal{S}^{(j)}_1$, the numerator and denominator exchange roles
under $U_t^{\mathrm{ver}}\mapsto \bar{U}_t^{\mathrm{ver}}$, giving
\[
\Lambda_t(\bar{U}_t^{\mathrm{ver}})
=
-\Lambda_t(U_t^{\mathrm{ver}}).
\]
The clipping in Equation~\eqref{eq:local_llr} preserves this
sign reversal.

Under this conditional uniformity, the distribution of
$U_t^{\mathrm{ver}}$ is invariant under the rotation
$u\mapsto u\oplus \Delta/K_t'$. Therefore
\begin{multline*}
\mathbb{E}\bigl[\Lambda_t
\mid\mathcal{G}_{t-1},\mathsf{Val}_t=0\bigr]
=
\mathbb{E}\bigl[\Lambda_t(\bar{U}_t^{\mathrm{ver}})
\mid\mathcal{G}_{t-1},\mathsf{Val}_t=0\bigr]
\\
=
-\mathbb{E}\bigl[\Lambda_t
\mid\mathcal{G}_{t-1},\mathsf{Val}_t=0\bigr],
\end{multline*}
which forces the expectation to be zero. Since
$Z_t=(2c_i-1)\Lambda_t/B_\epsilon$ with $c_i$ and $B_\epsilon$ fixed,
the same conclusion holds for $Z_t$. Averaging over the observed
interval completes the proof.
\end{proof}
\section{Supplementary Experimental Analysis}
\label{app:ex}
\phantomsection
\label{app:oracle_attack}
\noindent\textbf{Key-aware oracle attack.}
The TM1 attacks in the main text have no secret key or detector feedback and mainly reduce the structural survival rate $\rho$. To probe whether a stronger key-aware attacker could also drive the directional channel negative, we construct a privileged greedy oracle outside TM1.

The oracle is strictly outside TM1 because it can access
secret-key-dependent verifier-side evidence. Given an edit budget $r$
and a candidate search width $k$, it greedily selects replacements that
most reduce the aligned evidence, directly targeting the directional
channel $\mu_{\mathrm{adv}}$. Table~\ref{tab:oracle_attack} reports the
resulting decomposition.

\begin{table}[!htbp]
\centering
\caption{Privileged key-aware oracle stress test outside TM1. The oracle uses secret-key-dependent verifier-side evidence to target the directional channel $\mu_{\mathrm{adv}}$.}
\label{tab:oracle_attack}
\setlength{\tabcolsep}{5pt}
\resizebox{\linewidth}{!}{%
\begin{tabular}{lccccc}
\toprule
Configuration
& $\hat{\rho}$
& $\hat{\mu}_c$
& $\hat{\mu}_{\mathrm{adv}}$
& $\hat{\Gamma}$
& Bit Acc. (\%) \\
\midrule
\multicolumn{6}{l}{\textbf{Key-aware greedy oracle}} \\
\quad $r=0.1,\ k=6$
& 0.788 & 0.602 & $-$0.094 & 0.454 & 95.16 \\
\quad $r=0.1,\ k=15$
& 0.778 & 0.600 & $-$0.156 & 0.432 & 93.19 \\
\quad $r=0.2,\ k=6$
& 0.627 & 0.531 & $-$0.098 & 0.296 & 79.03 \\
\quad $r=0.2,\ k=15$
& 0.613 & 0.527 & $-$0.163 & 0.260 & 74.18 \\
\quad $r=0.3,\ k=6$
& 0.392 & 0.455 & $-$0.092 & 0.122 & 67.38 \\
\quad $r=0.3,\ k=15$
& 0.381 & 0.449 & $-$0.161 & 0.071 & 60.47 \\
\quad $r=0.4,\ k=6$
& 0.241 & 0.398 & $-$0.102 & 0.019 & 54.65 \\
\quad $r=0.4,\ k=15$
& 0.244 & 0.392 & $-$0.159 & $-$0.025 & 47.35 \\
\bottomrule
\end{tabular}%
}
\end{table}

The oracle reduces recovery through both channels. Increasing the edit
budget lowers $\hat{\rho}$, while increasing the search width makes
$\hat{\mu}_{\mathrm{adv}}$ more negative. For example, at $r=0.1$ the
oracle already drives $\hat{\mu}_{\mathrm{adv}}$ to $-0.094$ with
$k=6$ and to $-0.156$ with $k=15$, unlike the TM1 attacks in the main
text whose adversarial drift remains close to zero. As $r$ and $k$
increase, the effective margin decreases monotonically: it falls to
$0.071$ at $r=0.3,k=15$ and becomes negative at $r=0.4,k=15$, where the
bit accuracy drops to $47.35\%$.

This stress test is not intended as a realistic TM1 attack. Instead, it
shows the boundary captured by our theory. Practical adaptive rewriting
attacks primarily attack the structural channel $\rho$, whereas a
key-aware oracle can directly attack the directional evidence channel
$\mu_{\mathrm{adv}}$. Once such an oracle can systematically induce
negative drift, the positive-margin condition in the robustness bound
can be weakened or even broken.

\noindent\textbf{Entropy-aware gating.}
\label{app:gating_ablation}
Figure~\ref{fig:gating_ablation} evaluates the adaptive gating mechanism. The full Adapt 1-2-3 strategy consistently outperforms fixed or restricted adaptive variants at all payload capacities, especially under editing attacks. This confirms that allocating more payload capacity to high-entropy regions is effective for robust extraction.

\begin{figure}[!htbp]
    \centering
    \includegraphics[width=\linewidth]{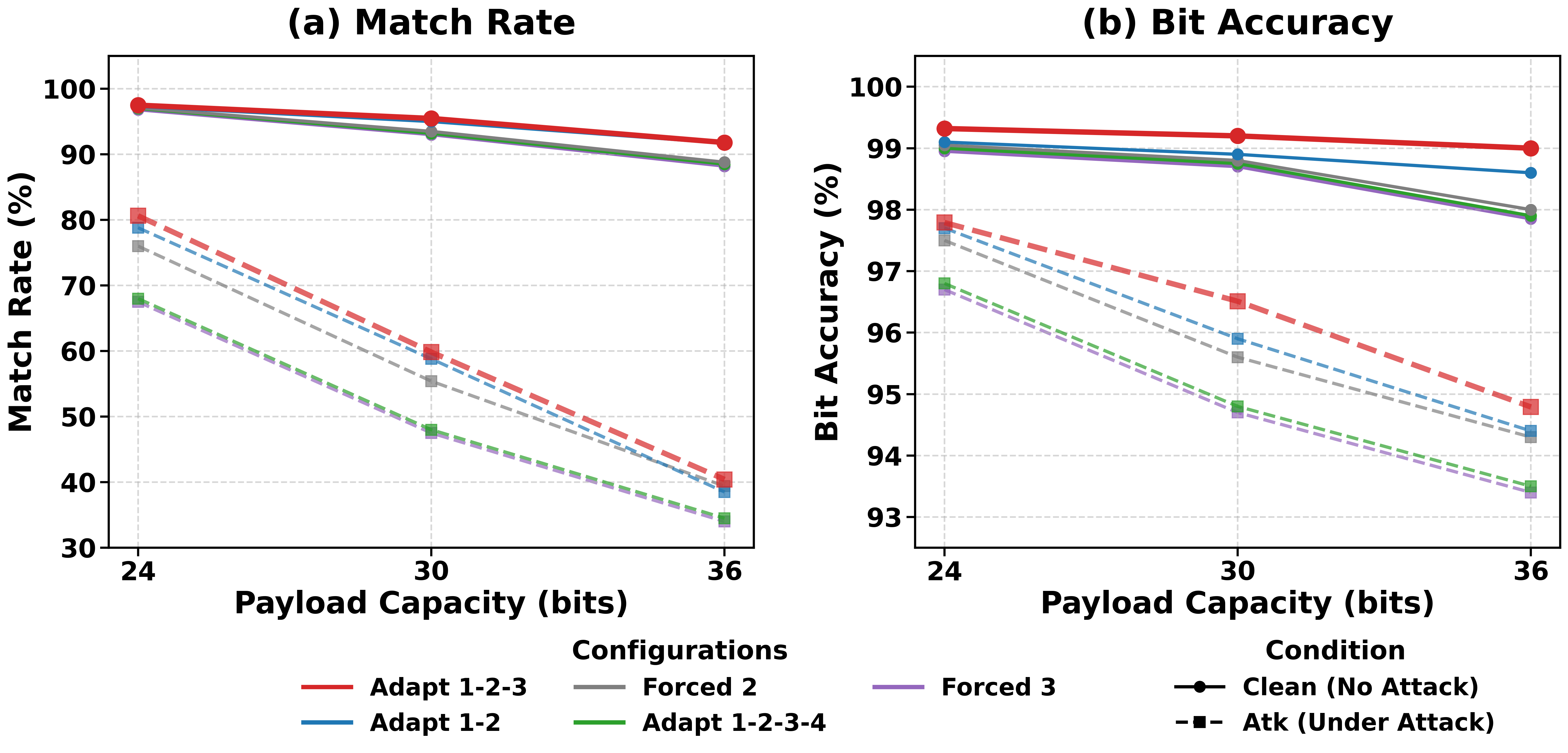}
    \caption{Comparison of entropy-aware gating strategies in clean and attacked settings.}
    \Description{Two panels compare match rate and bit accuracy for five gating strategies at 24, 30, and 36 bits. Adaptive 1-2-3 gating performs best in both clean and attacked settings, with its advantage most visible under attack.}
    \label{fig:gating_ablation}
\end{figure}

\FloatBarrier

\section{Kernels for Sector Evidence}
\label{app:kernels}

This appendix lists the symmetric finite-support kernels used for sector-evidence integration in Equation~\eqref{eq:sector_weight}. Recall
\[
w_t(s)=\int_{u\in I_t(x_t)} \mathcal K(u;A_{t,s}^{\mathrm{ver}})\,du.
\]

\noindent\textbf{Distance and support.}
We use the circular distance on $[0,1)$:
\begin{equation}
d(u,a)\triangleq \min\{|u-a|,\ 1-|u-a|\}\in\Bigl[0,\tfrac12\Bigr].
\end{equation}
At step $t$, when the arity is $a_t>0$, the circle is partitioned into $2^{a_t}$ sectors. We set the support radius of the kernel to half of the sector width:
\begin{equation}
h_t \triangleq 2^{-(a_t+1)}.
\end{equation}
All kernels below satisfy $\mathcal K(u;A_{t,s}^{\mathrm{ver}})=0$ whenever $d(u,A_{t,s}^{\mathrm{ver}})\ge h_t$.

\noindent\textbf{Definitions of finite-support kernels.}
Let $d_t \triangleq d(u,A_{t,s}^{\mathrm{ver}})$. We use
\begin{align}
\mathcal K_{\mathrm{box}}(u;A_{t,s}^{\mathrm{ver}})
&=\mathbb I(d_t<h_t),
\\
\mathcal K_{\mathrm{cos}}(u;A_{t,s}^{\mathrm{ver}})
&=\tfrac12\!\left(1+\cos\!\bigl(\pi d_t/h_t\bigr)\right)\,\mathbb I(d_t<h_t),
\\
\mathcal K_{\mathrm{tri}}(u;A_{t,s}^{\mathrm{ver}})
&=\bigl(1-d_t/h_t\bigr)_+,
\\
\mathcal K_{\mathrm{epa}}(u;A_{t,s}^{\mathrm{ver}})
&=\bigl(1-(d_t/h_t)^2\bigr)_+,
\\
\mathcal K_{\mathrm{biw}}(u;A_{t,s}^{\mathrm{ver}})
&=\bigl(1-(d_t/h_t)^2\bigr)_+^{2},
\end{align}
where $\mathbb I(\cdot)$ is the indicator function and $(x)_+\triangleq \max\{x,0\}$.

\FloatBarrier

\section{Algorithms}
This appendix gives detailed pseudocode for the PURA framework.
Algorithm~\ref{alg:embed} outlines the embedding procedure described
in Section~\ref{subsec:embedding}. It combines entropy-aware adaptive
gating with anchor-based inverse transform sampling.
Algorithm~\ref{alg:extract} presents the extraction and soft-decoding
pipeline described in Section~\ref{subsec:extraction}. It shows how
token probabilities are converted into bit-level log-likelihood ratios
for reliability-aware message recovery.

\begin{figure*}[t]
\SetAlgoNlRelativeSize{-1}
\begin{minipage}[t]{0.48\textwidth}
\makeatletter\@twocolumnfalse\makeatother
\begin{algorithm}[H]
\caption{PURA embedding at generation step $t$}
\label{alg:embed}
\footnotesize
\KwIn{Distribution $\pi_t$, context $x_{<t}$, key $\mathsf{sk}$,
payload $c$.}
\KwOut{Next token $x_t$.}

$\Pi \gets \text{DerivePermutation}(\mathsf{sk})$\;
$\tilde{\pi}_t \gets \text{ApplyPermutation}(\pi_t, \Pi)$\;
$H_t \gets \text{Entropy}(\tilde{\pi}_t)$\;
$h_{\mathrm{seed}} \gets \mathcal{H}_{\mathrm{seed}}(x_{t-W:t-1};\mathsf{sk})$\;
$a_t \gets \text{BitWidth}(H_t, h_{\mathrm{seed}}, \mathsf{Hist})$\;

\eIf{$a_t = 0$}{
    \KwRet{sample $x_t$ from $\pi_t$}\;
}{
    $\mathsf{Hist} \gets \mathsf{Hist} \cup \{h_{\mathrm{seed}}\}$\;
    $U_t \gets \mathrm{Unif}(h_{\mathrm{seed}})$\;
    $\mathrm{idx}(t) \gets
        \mathrm{Int}\!\bigl(
        \mathcal{H}_{\mathrm{idx}}(x_{t-W:t-1};\mathsf{sk})
        \bigr) \bmod |c|$\;
    $b_t \gets \{\,c_{(\mathrm{idx}(t)+j)\bmod|c|}\,\}_{j=0}^{a_t-1}$\;
    $S_t \gets \text{GrayInverse}(b_t)$\;
    $K_t \gets 2^{a_t}$\;
    $A_t \gets \bigl(U_t + (S_t + 0.5)/K_t\bigr)\bmod 1$\;
    $r \gets \text{InverseCDF}(\tilde{\pi}_t, A_t)$\;
    $x_t \gets \tilde{v}_r$\;
}
\KwRet{$x_t$}\;
\end{algorithm}
\end{minipage}\hfill
\begin{minipage}[t]{0.48\textwidth}
\makeatletter\@twocolumnfalse\makeatother

\begin{algorithm}[H]
\caption{PURA decoding}
\label{alg:extract}
\footnotesize
\KwIn{Observed sequence $\mathbf{x'}$, context $x_{<t}$,
key $\mathsf{sk}$, verifier model $\mathcal{M}_{\mathrm{ver}}$,
encoded length $|c|$, parity layout.}
\KwOut{Recovered payload $\hat{m}$.}
$\{\pi_t^{\mathrm{ver}}(\cdot\mid x_{<t})\}_{t=1}^{L}
    \gets \text{BatchForward}(\mathcal{M}_{\mathrm{ver}},\mathbf{x'})$\;
$\Pi \gets \text{DerivePermutation}(\mathsf{sk})$\;
$\Lambda_{\mathrm{total}}^{(i)} \gets 0$ for all $i\in\{0,\dots,|c|-1\}$\;
\For{$t \gets 1$ \KwTo $L$}{
    $\tilde{\pi}_t^{\mathrm{ver}} \gets
        \text{ApplyPermutation}(\pi_t^{\mathrm{ver}}, \Pi)$\;
    $H_t \gets \text{Entropy}(\tilde{\pi}_t^{\mathrm{ver}})$\;
    $h_{\mathrm{seed}} \gets
        \mathcal{H}_{\mathrm{seed}}(x_{t-W:t-1};\mathsf{sk})$\;
    $a_t \gets \text{BitWidth}(H_t,h_{\mathrm{seed}},\mathsf{Hist})$\;
    \If{$a_t \geq 1$}{
        $\mathsf{Hist} \gets \mathsf{Hist}\cup\{h_{\mathrm{seed}}\}$\;
        $U_t^{\mathrm{ver}} \gets \mathrm{Unif}(h_{\mathrm{seed}})$\;
        $\mathrm{idx}(t) \gets
            \mathrm{Int}(\mathcal{H}_{\mathrm{idx}}
            (x_{t-W:t-1};\mathsf{sk}))\bmod|c|$\;
        $\tilde{r}_t \gets \mathrm{rank}_{\Pi}(x_t)$\;
        $I_t(x_t) \gets
            \text{IntervalFromCDF}(\tilde{\pi}_t^{\mathrm{ver}},\tilde{r}_t)$\;
        $K_t \gets 2^{a_t}$\;
        \For{$s \gets 0$ \KwTo $K_t - 1$}{
            $A_{t,s}^{\mathrm{ver}} \gets
                (U_t^{\mathrm{ver}}+(s+0.5)/K_t)\bmod 1$\;
            $P_{t,s} \gets \displaystyle\int_{u\in I_t(x_t)}
                \mathcal{K}(u;\,A_{t,s}^{\mathrm{ver}})\,du$\;
        }
        $\{\Lambda_t^{(j)}\}_{j=0}^{a_t-1} \gets
            \text{SectorToBitLLRs}(\{P_{t,s}\}_{s=0}^{K_t-1})$\;
        \For{$j \gets 0$ \KwTo $a_t - 1$}{
            $i \gets (\mathrm{idx}(t)+j)\bmod|c|$\;
            $\Lambda_{\mathrm{total}}^{(i)} \gets
                \Lambda_{\mathrm{total}}^{(i)}+\Lambda_t^{(j)}$\;
        }
    }
}

$\hat{c}_i \gets
    \mathbb{I}(\Lambda_{\mathrm{total}}^{(i)}\ge0)$
for all $i\in\{0,\dots,|c|-1\}$\;
$\hat{m} \gets \text{ReliabilityAidedParity}\!\left(
    \hat{c},\,\{|\Lambda_{\mathrm{total}}^{(i)}|\},\,\text{parity layout}\right)$\;
\KwRet{$\hat{m}$}\;
\end{algorithm}
\end{minipage}
\Description{Two side-by-side pseudocode blocks summarize PURA embedding and decoding. Embedding derives the keyed permutation and gate, maps payload bits to a Gray-coded sector anchor, and samples a token. Decoding reconstructs token intervals, accumulates bit-level LLRs, and applies reliability-aided parity repair.}
\end{figure*}

\FloatBarrier
\end{document}